\RequirePackage{fix-cm}
\documentclass[smallextended]{svjour3}       
\smartqed  

\usepackage{amssymb,amsmath,amsfonts,mathrsfs}
\usepackage[utf8]{inputenc}
\usepackage[T1]{fontenc}
\usepackage{caption}
\usepackage[english]{babel}
\usepackage{graphicx}
\usepackage{color}
\usepackage{doi}
\usepackage{commath}
\usepackage{tablefootnote}
\usepackage{threeparttablex}
\usepackage{float}

\DeclareMathOperator{\RR}{\mathbb{R}}

\DeclareMathOperator{\QQ}{\mathbb{Q}}

\DeclareMathOperator{\ZZ}{\mathbb{Z}}

\DeclareMathOperator{\GC}{\mathcal{G}}

\DeclareMathOperator{\BC}{\mathcal{B}}

\DeclareMathOperator{\PC}{\mathcal{P}}
\DeclareMathOperator{\DC}{\mathcal{D}}
\DeclareMathOperator{\SC}{\mathcal{S}}

\DeclareMathOperator{\FC}{\mathcal{F}}

\DeclareMathOperator{\JC}{\mathcal{J}}
\DeclareMathOperator{\IC}{\mathcal{I}}

\DeclareMathOperator{\QC}{\mathcal{Q}}

\DeclareMathOperator{\GS}{\mathscr{G}}

\DeclareMathOperator{\PS}{\mathscr{P}}

\DeclareMathOperator{\QS}{\mathscr{Q}}

\DeclareMathOperator{\inth}{\Lambda}

\DeclareMathOperator{\size}{size}
\DeclareMathOperator{\poly}{poly}

\DeclareMathOperator{\dom}{dom}

\DeclareMathOperator{\const}{const}

\DeclareMathOperator{\diag}{diag}

\DeclareMathOperator{\BZero}{\mathbf 0}

\newcommand*{\intint}[2][1]{\{#1, \dots, #2\}}

\DeclareMathOperator{\SPO}{S\!P}
\DeclareMathOperator{\EVO}{E\!V}
\DeclareMathOperator{\DPR}{D\!P}

\title{Structured $(\min,+)$-Convolution And Its Applications For The Shortest Vector, Closest Vector, and Separable Nonlinear Knapsack Problems
\thanks{The article was prepared within the framework of the Basic Research Program at the National Research University Higher School of Economics (HSE).
}
}

\begin{document}

\author{D.~V.~Gribanov, I.~A.~Shumilov, D.~S.~Malyshev}


\institute{D.~V.~Gribanov \at National Research University Higher School of Economics, 25/12 Bolshaja Pecherskaja Ulitsa, Nizhny Novgorod, 603155, Russian Federation\\
\email{dimitry.gribanov@gmail.com}
\and
D.~S.~Malyshev \at National Research University Higher School of Economics, 25/12 Bolshaja Pecherskaja Ulitsa, Nizhny Novgorod, 603155, Russian Federation;
\at Huawei, Intelligent systems and Data science Technology center (2012 Laboratories), 7/9 Smolenskaya Square, Moscow, 121099, Russian Federation\\
\email{dsmalyshev@rambler.ru}
\and
I.~A.~Shumilov \at Lobachevsky State University of Nizhny Novgorod, 23 Gagarina Avenue, Nizhny Novgorod, 603950, Russian Federation\\
\email{ivan.a.shumilov@gmail.com}
}

\maketitle
\begin{abstract}
In this work we consider the problem of computing the $(\min, +)$-convolution of two sequences $a$ and $b$ of lengths $n$ and $m$, respectively, where $n \geq m$. We assume that $a$ is arbitrary, but $b_i = f(i)$, where $f(x) \colon [0,m) \to \RR$ is a function with one of the following properties:
\begin{enumerate}
    \item the linear case, when $f(x) =\beta + \alpha \cdot x$;
    \item the monotone case, when $f(i+1) \geq f(i)$, for any $i$;
    \item the convex case, when $f(i+1) - f(i) \geq f(i) - f(i-1)$, for any $i$;
    \item the concave case, when $f(i+1) - f(i) \leq f(i) - f(i-1)$, for any $i$;
    \item the piece-wise linear case, when $f(x)$ consist of $p$ linear pieces;
    \item the polynomial case, when $f \in \ZZ^d[x]$, for some fixed $d$.
\end{enumerate}

To the best of our knowledge, the cases 4--6 were not considered in literature before. We develop true sub-quadratic algorithms for them.

We apply our results to the knapsack problem with a separable nonlinear objective function, shortest lattice vector, and closest lattice vector problems.






\keywords{Convolution \and Nonlinear Knapsack \and Separable Objective \and Shortest Vector Problem \and Closest Vector Problem \and Dynamic Programming \and Integer Programming \and Piece-wise Linear}
\end{abstract}

\section{Introduction}\label{intro_sec}

\subsection{Structured $(\min,+)$-convolution}\label{conv_def_subseq}

The standard \emph{$(\min,+)$-convolution problem} is formulated in the following way. For given $a = \{a_0, a_1, \dots, a_{n-1}\}$ and $b = \{b_0, b_1, \dots, b_{m-1}\}$, where $n\geq m$, it is to compute $a \star b := c$, defined by the formula:
\begin{equation}\label{min_conv_def}\tag{MinConv}
    c_k = \min\limits_{i + j = k} \{a_i + b_j\}, \quad \text{for} \quad k \in \intint[0]{n+m-2}.
\end{equation}

In the current paper, it is more natural for us to work with another problem that is clearly linear-time equivalent to the original one. The problem is to compute $a \bullet b := c$, defined by the formula:
\begin{equation}\label{min_dot_def}\tag{ReducedMinConv}
    c_k = \min\limits_{0 \leq i \leq m-1} \{a_{k+i} + b_i\}, \quad \text{for} \quad k \in \intint[0]{n-m}.
\end{equation}
This formulation of \ref{min_conv_def} can be found, for example, in \cite{FasterAPSP_Circ}. We will call it \ref{min_dot_def}.

Unlike the standard $(+, \times )$-convolution, it is not known whether the  $(\min,+)$-convolution admits the existence of truly sub-quadratic algorithms. Moreover, the lack of truly sub-quadratic algorithms, despite considerable efforts, has led researchers to postulate the \emph{MinConv-hypothesis} that \ref{min_conv_def} cannot be solved in $O(n^{2 - \delta})$ time, for any constant $\delta > 0$ \cite{EquivConv,FineGrained_OneDim}. Many problems are known to have conditional lower bounds from the
MinConv hypothesis, see, e.g., \cite{TreeSparsity_NearLinear,SmallestRect_Rev,EquivConv,OnIPAndConv,FineGrained_OneDim,LRC14}. 

The trivial $O(n^2)$ running time can be improved to $n^2 / 2^{\Omega(\sqrt{\log n})}$ using a reduction to the $(\min, +)$-matrix product, due to Bremner et al. \cite{NecklacesConv}, and using the Williams' algorithm for all the pairs shortest path (APSP) problem \cite{FasterAPSP_Circ}, which was derandomized later by Chan and Williams \cite{DerandomRazbSmol}.

However, the $O(n^2)$-time barrier can be beaten for different special cases. Let $b_i = f(i)$, for some function $f \colon [0,m) \to \RR$. For some $f$, the computational complexity of \ref{min_conv_def} can be significantly reduced. In this paper, we consider the following cases:
\begin{enumerate}
    \item \emph{the linear case}, when $f(i) = \alpha \cdot i + \beta$, for fixed $\alpha,\beta \in \QQ$ and $i \in \intint[0]{m-1}$;
    \item \emph{the convex case}, when $f(i+1) - f(i) \geq f(i) - f(i-1)$, for $i \in \intint[1]{m-2}$;
    \item \emph{the concave case}, when $f(i+1) - f(i) \leq f(i) - f(i-1)$, for $i \in \intint[1]{m-2}$;
    \item \emph{the polynomial case}\footnote[1]{Actually, a more general class of rational functions $f(x)/g(x)$, where $f,g \in \ZZ^d[x]$, could be considered by the cost of using $2d$ instead of $d$ in the complexity bound. But for the sake of simplicity, this case is not considered.
    
    More generally, our approach is applicable to any functions $f(x)$ with a fixed number of inflection points.}, when $f(x) \in \ZZ^d[x]$, for some fixed $d$;
    \item \emph{the piece-wise linear case}, when $f(x)$ is represented by a piece-wise linear function with $p$ pieces;
    \item \emph{the monotone case}, when $a$ and $b$ are both monotone sequences and  $|a_i| = O(n), |b_i| = O(n)$.
\end{enumerate}

\begin{definition}\label{EVO_def}
For the second and third cases, we assume that $f$ is defined by \emph{the evaluation oracle}, which is denoted by $\EVO$ in our paper. Given $f$ and $x \in \dom(f)$, this oracle returns (if it is possible) the value of $f(x)$. For the sake of simplicity, we assume that $\EVO$ can also compare the values of given $f$ in pairs of different points $x,y \in \dom(f)$. 
\end{definition}

To the best of our knowledge, the third, fourth, and fifth cases are new, and we develop the first sub-quadratic algorithms for them in the current paper. For the third and fifth cases, we simultaneously estimate the oracle and arithmetic complexities. 

Despite that the linear case is a special variant of the convex case, we make a separate category for it, because of the beautiful folklore linear-time solution. This solution is based on the folklore queue data structure that additionally supports queries to a minimal element and contains a really low constant term, which is hiding inside the $O$-notation. Since we did not find a description of this solution in literature, we will give it in Subsection \ref{linear_conv_subs}.

Probably, an $O(n)$-time solution for the \ref{min_conv_def} linear case was firstly implicitly presented by Pferschy in \cite{KnapsackDPTrick}, where an $O(n \cdot W)$-time dynamic programming algorithm (DP-algorithm) for the bounded knapsack problem was presented. But, since the description of a knapsack DP-algorithm from \cite{KnapsackDPTrick} is quite complex, it is hard to extract an algorithm for \ref{min_conv_def} from it, and the folklore solution looks more natural and effective.

In the following Table \ref{conv_tb}, we emphasize the best known complexity bounds for the cases, mentioned above. Details could be found in Subsection \ref{plinear_subs} and in Theorems \ref{minconv_poly_th}, \ref{minconv_convex_th} of Subsection \ref{poly_subs}.
\begin{table}[ht!]
    \centering
    \begin{tabular}{||c|c|c||}
    \hline
    \hline
    Case: & Time: & Reference: \\
    \hline
    \hline
    General & $n^2 / 2^{\Omega(\sqrt{\log n})}$ &  Bremner et al. \cite{NecklacesConv} \&  Williams \cite{FasterAPSP_Circ} \\
    & & see also \cite{DerandomRazbSmol} \\
    \hline
    Monotone & $\tilde O(n^{1.5})$ & Shucheng et al. \cite{FasterMinConvMonotone} \\
    & & see also \cite{3SumViaAdditioveComb} \\
    \hline
    $f$ is linear  & $O(n)$ & folklore \\
     & & another variant due to Pferschy \cite{KnapsackDPTrick} \\
    \hline
    $f$ is convex   & $O(n)$ & Axiotis \& Tzamos \cite{GraphKnap} (2019) \\
    & $O\bigl(n \cdot \log(n)\bigr)$ & Kellerer and Pferschy \cite{ImprovedDP} (2004) \\
    \hline
    $f$ is concave   & $O(n^{4/3} \cdot \log^2(n))$ & {\color{red} this work} \\
    \hline
    $f$ is piece-wise linear & $O\bigl(p \cdot n \cdot \log(n)\bigr)$ & {\color{red} this work} \\
    & &  $p$ is a number of pieces \\
    \hline
    $f \in \ZZ^d[x]$ & $O\bigl(d^3 \cdot n^{1 + \frac{\sigma-1}{\sigma}} \cdot \log^2(n)\bigr)$ & {\color{red} this work} \\
    & & $\sigma = \log_2(d) + \frac{1}{1 + \log_2(d)}$ \\
    \hline
    Examples: & & \\
    $f \in \ZZ^2[x]$ & $\tilde O(n^{4/3})$ &\\
    $f \in \ZZ^3[x]$ & $\tilde O(n^{1.493})$ &\\
    $f \in \ZZ^4[x]$ & $\tilde O(n^{11/7})$ &\\
    \hline
    \hline
    \end{tabular}
    
    \captionsetup{justification=centering}
    \caption{Complexity bounds for \eqref{min_conv_def}}
    \label{conv_tb}
\end{table}

\begin{remark}\label{complexity_rm}
Everywhere in our paper, by the words "time", "running time" or "complexity" we mean the number of elementary arithmetic operations. Additionally, we guaranty that sizes of values, arising in intermediate computations, are respectively small and polynomially bounded with respect to the input size. 
Thus, in our work, we do not address the issue of numbers growth in intermediate calculations, unless such a thing is specifically highlighted.
As usual, we use the $\tilde O$-notation to hide logarithmic terms in complexity estimations.
\end{remark}

\subsection{ The nonlinear knapsack problem}\label{knap_def_subseq}

Let $w \in \ZZ_{>0}^{n}$, $W \in \ZZ_{>0}$, $u \in \ZZ^n_{>0}$, and $f \colon \ZZ^n \to \RR$ be \emph{a separable function}. That is $f(x) = f_1(x_1) + \dots + f_n(x_n)$, where $f_i(x) \colon \ZZ \to \RR$. Let us consider \emph{the bounded knapsack problem with a general separable objective function}, defined as follows:
\begin{gather}
f(x) \to \max \notag\\
\begin{cases}
w^\top x = W\\
0 \leq x \leq u\\
x \in \ZZ^n.
\end{cases}\label{knap_gen_def}\tag{KNAP-GEN}
\end{gather}

We are interested in the following special cases for $f(x)$ that define the corresponding problems:
\begin{gather}
    f(x) = c^\top x, \quad \text{for some} \quad c \in \ZZ_{>0}^n; \label{knap_lin_def}\tag{KNAP-LIN}\\
    f_i(x) \text{ are all convex functions}; \label{knap_mconv_def}\tag{KNAP-CONV}\\
    f_i(x) \text{ are all concave functions}; \label{knap_conv_def}\tag{KNAP-CONC}\\
    f_i(x) \text{ are all piece-wise linear, with $p$ pieces}; \label{knap_plin_def}\tag{KNAP-PLIN}\\
    f_i(x) \in \ZZ^d[x], \quad \text{for some fixed $d$}. \label{knap_poly_def}\tag{KNAP-POLY}
\end{gather}
The unbounded version of \ref{knap_lin_def}, when $u_i = +\infty$, for all $i$, will be denoted by U-KNAP-LIN. Note that any algorithm that solves a variant of the bounded knapsack problem also can be used to solve the unbounded one. Additionally, note that \ref{knap_poly_def} is a natural generalization of \ref{knap_lin_def}, since \ref{knap_poly_def} with $d=1$ is equivalent to \ref{knap_lin_def}. 

Put $w_{\max} = \|w\|_{\infty}$, $c_{\max} = \|c\|_{\infty}$, and $u_{\max} = \|u\|_{\infty}$.

\subsubsection{Some results about \ref{knap_lin_def}.}
The paper \cite{FastApproxKnap} gives a reduction of \ref{knap_lin_def} to $\{0,1\}$-knapsack of $O\bigl(n \cdot \log(u_{\max})\bigr)$ weights, bounded by $O(u_{\max} \cdot w_{\max})$. Together with the basic dynamic programming technique, due to Bellman \cite{Bellman}, it gives $\tilde O(n \cdot W)$-time algorithm for \ref{knap_lin_def}. The paper \cite{KnapsackDPTrick} removes the logarithmic term and gives an $O(n \cdot W)$-time algorithm. The linear-time algorithm for the monotone convex $(\min,+)$-convolution, due to \cite{GraphKnap}, together with the principle to put equivalent-weight items into buckets, due to \cite{ImprovedDP} (see also \cite{GraphKnap,KnapsackSubsetSum_SmallItems}), reduces the running time to $O(n + m \cdot W)$, where $m$ is the number of unique weights. Since $m \leq \min\{w_{\max}, n\}$, it gives an $O(n + w_{\max} \cdot W)$-time algorithm. 

The paper \cite{SteinitzILP} introduces an elegant proximity argument and uses it to give an $\tilde O(n \cdot w_{\max}^2)$-time algorithm. The work \cite{MultyKnapsack_Grib} combines the above proximity argument with the folklore linear $(\min,+)$-convolution algorithm and reduces logarithmic factors in the last complexity bound to give an $O(n \cdot w_{\max}^2)$-time algorithm. The same algorithm is presented in \cite{OnCanonicalProblems_Grib} for more general class of problems, which contains $\Delta$-modular simplicies and closed polyhedra. Finally, the paper \cite{KnapsackSubsetSum_SmallItems} carefully combines ideas of a part of the previous papers to give the state of the art $O(n + m \cdot w_{\max}^2)$-time algorithm for \ref{knap_lin_def}. The state of the art algorithms with different parametrizations by $c_{\max}$ are given in \cite{KnapPrediction,KnapsackSubsetSum_SmallItems}. The following Table \ref{lin_knap_tb} gives some comparison of the above results.

\begin{table}[h!]
    \centering
    \begin{tabular}{||c|c||}
    \hline
    \hline
    Time: & Reference: \\
    \hline
    $\tilde O(n \cdot W)$ & Bellman \cite{Bellman} \& Lawler \cite{FastApproxKnap} \\
    \hline
    $O(n \cdot W)$ & Pferschy \cite{KnapsackDPTrick} \\
    \hline
    $O(n + m \cdot W) = O(n + w_{\max} \cdot W)$ & Axiotis \& Tzamos \cite{GraphKnap}\\
    & see also \cite{ImprovedDP} and \cite{KnapsackSubsetSum_SmallItems} \\
    \hline
    $\tilde O(n \cdot w_{\max}^2)$ & Eisenbrand \& Weismantel \cite{SteinitzILP} \\
    \hline
    $O(n \cdot w_{\max}^2)$ & Gribanov \cite{MultyKnapsack_Grib} \\
    & see also \cite{OnCanonicalProblems_Grib}\\
    \hline
    $O(n + m \cdot w_{\max}^2) = O(n + w_{\max}^3)$ & Polak, Rohwedder \& W\k{e}grzycki \cite{KnapsackSubsetSum_SmallItems} \\
    & $m \leq \min\{n,w_{\max}\}$ is a number of unique weights \\
    \hline
    \hline
    \end{tabular}
    
    \captionsetup{justification=centering}
    \caption{Complexity bounds for \ref{knap_lin_def} with respect to $W$ and $w_{\max}$ }
    \label{lin_knap_tb}
\end{table}

\subsubsection{Nonlinear separable objective function.} Many tricks, developed for \ref{knap_lin_def}, do not work in this case. To the best of our knowledge, the best known algorithm, parameterized by $W$ or $w_{\max}$ that we can apply for the problem \ref{knap_gen_def}, is a straightforward application of the Bellman's DP-principle \cite{Bellman}  that gives an $O(n \cdot W^2)$-time algorithm. A straightforward application of the linear-time monotone convex $(\min,+)$-convolution algorithm, due to \cite{GraphKnap}, gives an $O(n \cdot W)$-time algorithm for \ref{knap_mconv_def}. 
We can use different variants of the $(\min,+)$-convolution, considered in our paper, to construct pseudopolynomial algorithms for the problems \ref{knap_conv_def}, \ref{knap_plin_def}, and \ref{knap_poly_def}. Related results are given in the following Table \ref{nonlin_kanp_tb} (details could be found in Theorem \ref{main_knap_comp_th}):

\begin{table}[h!]
    \centering
    \begin{tabular}{||c|c|c||}
    \hline
    \hline
    Case: & Time: & Reference: \\
    \hline
    \hline
    \ref{knap_gen_def} & $O(n \cdot W^2)$ & Bellman \cite{Bellman} \\
    \hline
    \ref{knap_mconv_def} & $O(n \cdot W)$ & Axiotis \& Tzamos \cite{GraphKnap} \\
    \hline
    \ref{knap_conv_def}  & $\tilde O(n \cdot W^{4/3})$ & {\color{red} this work} \\
    \hline
    \ref{knap_plin_def}   & $\tilde O\bigl(p \cdot n \cdot W \bigr)$ & {\color{red} this work} \\
    \hline
    \ref{knap_poly_def}   & $\tilde O(d^3 \cdot n \cdot W^{1 + \frac{\sigma-1}{\sigma}})$ & {\color{red} this work} \\
    & $\sigma = \log_2(d) + \frac{1}{1 + \log_2(d)}$ & \\
    \hline
    Examples: & & \\
    $d = 2$ & $\tilde O(n \cdot W^{4/3})$ &\\
    $d = 3$ & $\tilde O(n \cdot W^{1.493})$ &\\
    $d = 4$ & $\tilde O(n \cdot W^{11/7})$ &\\
    \hline
    \hline
    \end{tabular}
    
    \captionsetup{justification=centering}
    \caption{Complexity bounds for \ref{knap_lin_def}, the nonlinear cases}
    \label{nonlin_kanp_tb}
\end{table}

\begin{remark}
The obtained results can be extended for working with \emph{the multidimensional knapsack problem} without significant effort. Since the analysis is straightforward with respect to the $1$-dimensional case, the complexity bounds are the same as in the previous table with the only one difference: $W$ need to be replaced by $W^s$, where $s$ is the knapsack dimension parameter.
\end{remark}

\subsubsection{Some other related results on unbounded \ref{knap_lin_def}} The paper of Nesterov \cite{knapsack_nesterov} uses a beautiful method of generating functions to give an \mbox{$\tilde O(n \cdot w_{\max} + W)$}-time algorithm for U-KNAP-LIN. The paper, due to Jansen \& Rohwedder \cite{OnIPAndConv}, presents an $O(w_{\max}^2)$-time algorithm that can be additionally translated on general ILP problems. Despite the fact that the Nesterov's result is not very well known to the community, for $W = O(w^{2-\varepsilon}_{\max})$ and $n = O(w^{1-\varepsilon}_{\max})$ it outperforms the $O(w^2_{\max})$-complexity result. Additionally, note that, due to proximity (see \cite{SteinitzILP}) or periodicity reasons (see \cite{DiscConvILP}), it can be assumed that $W \leq w_{\max}^2$. In the natural assumption that $n \leq w_{\max}$, we have $\tilde O(n \cdot w_{\max} + W) = \tilde O(w^2_{\max})$. Thus, the Nesterov's complexity bound loses only a logarithmic factor with respect to the complexity bound, due to Jansen \& Rohwedder.

Another parameterizations could also be used. For example, the paper \cite{OnCanonicalProblems_Grib} gives an $O(n \cdot w_{opt}^2)$-time algorithm, where $w_{opt}$ is the weight of an item with the optimal relative cost:
$$
          \frac{c_j}{w_j} = \min_{i \in \intint n} \left\{ \frac{c_i}{w_i} \right\} \quad\text{and}\quad w_{opt} := w_j.
$$
Since $w_{opt} \leq w_{\max}$, this is a weaker parametrization than the parametrization by $w_{\max}$. Note that for $w_{opt} = o(w_{\max}/\sqrt{n})$, it outperforms the $O(w_{\max}^2)$-complexity bound \cite{OnIPAndConv}, due to Jansen \& Rohwedder.
Additionally, it is known, due to \cite{OnCanonicalProblems_Grib}, that if $W \geq w^2_{opt}$, then U-KNAP-LIN can be solved by a more efficient \mbox{$O\bigl(n \cdot w_{opt} \cdot \log(w_{opt})\bigr)$}-time algorithm.

The paper \cite{FasterKnap_MConv} uses the fast monotone $(\min,+)$-convolution, due to \cite{FasterMinConvMonotone}, and presents an $\tilde O\bigl(n + (c_{\max} + w_{\max})^{1.5} \bigr)$-time algorithm for U-KNAP-LIN, which outperforms an $O(w^2_{\max})$-time algorithm for $c_{\max} = O(w_{\max}^{4/3})$. The following Table \ref{uknap_tb} gives some comparison of the above results.

\begin{table}[h!]
    \centering
    \begin{tabular}{||c|c||}
    \hline
    \hline
    Time: & Reference: \\
    \hline
    $\tilde O(n \cdot w_{\max}^2 + W) = \tilde O(w^2_{\max})$ & Nesterov \cite{knapsack_nesterov} \\
    \hline
    $O(w^2_{\max})$ & Jansen \& Rohwedder \cite{OnIPAndConv} \\
    \hline
    $O(n \cdot w_{opt}^2)$ & Gribanov et al. \cite{OnCanonicalProblems_Grib}\\
    \hline
    $\tilde O\bigl(n + (c_{\max} + w_{\max})^{1.5} \bigr)$ & Bringmann \& Cassis \cite{FasterKnap_MConv} \\
    \hline
    \hline
    \end{tabular}
    
    \captionsetup{justification=centering}
    \caption{Complexity bounds for U-KNAP-LIN }
    \label{uknap_tb}
\end{table}

\subsection{The shortest and closest vector problems}\label{SVP_def_subseq}

Let $A \in \ZZ^n$, $\Delta = \abs{\det(A)} > 0$, $q \in \QQ^n$, and $\inth(A) = \{A t \colon t \in \ZZ^n\}$. Clearly, $\inth(A)$ is a full rank integer lattice with determinant $\Delta$. \emph{The shortest lattice vector problem} and the \emph{the closest lattice vector problem} with respect to the $l_p$-norm can be formulated as follows:
\begin{equation}\label{SVP_def}\tag{SVP}
    \min\bigl\{ \norm{x}_p \colon x \in \inth(A) \setminus \{\BZero\}\bigr\},
\end{equation}
\begin{equation}\label{CVP_def}\tag{CVP}
    \min\bigl\{ \norm{x - q}_p \colon x \in \inth(A)\bigr\}.
\end{equation}

In our paper, we consider only exact algorithms for \ref{SVP_def} and \ref{CVP_def} with theoretically provable complexity bounds. So, we avoid many works about approximate solutions or efficient practical algorithms. For a recent survey, see \cite{SVP_survey_2021}, see also \cite{SVP_survey_2011}. 

Exact algorithms for \ref{SVP_def} and \ref{CVP_def} have a rich history. The first direction of enumeration-based algorithms dates back to the papers of Pohst \cite{PohstSVP}, Kannan \cite{KannanSVP}, and Fincke \& Pohst \cite{FinckePohstSVP}. The Kannan's paper \cite{KannanSVP} gives an $n^{O(n)}$-time algorithm for \ref{SVP_def} and \ref{CVP_def}, and many others improved upon his technique to achieve better running times \cite{FinckePohstSVP,ImprovedKannan_Hanrot,HelfSVP,FastKannanSVP_Micc}. An important feature of these algorithms is that they are of polynomial space. To the best of our knowledge, the state of the art complexity bound $n^{\frac{n}{2 e} + o(n)}$ for \ref{SVP_def} via enumeration-based approach is given in \cite{FastKannanSVP_Micc}, due to Micciancio \& Walter.

Another direction is sieving-based algorithms for \ref{SVP_def}. It is dated to the seminal paper of Ajtai, Kumar \& Sivakumar \cite{AKS_2001}. The algorithms from this class have better theoretical running time $2^{O(n)}$ with respect to enumeration-base algorithms, but use exponential $2^{O(n)}$ space. Many extensions and improvements of sieving technique have been proposed in \cite{SVP_DiscreteGauss,AKS_2002,some_sieving_2008,SVP_Blomer_2009,SVP_survey_2011,SVP_Graphs,MV_2010,SVP_Nguyen2008,PS_ImprovedAKS_2009}. The paper \cite{SVP_DiscreteGauss}, due to Aggarwal, Dadush \& Regev, gives the state of the art $2^{n + o(n)}$-complexity bound. In fact, this paper solves the more general \emph{Discrete Gaussian Sampling} (DGS) problem. Note that above papers do not give single-exponential $2^{O(n)}$-time algorithms for \ref{CVP_def}. The first paper that generalizes the sieving approach to solve \ref{CVP_def} in single-exponential time is due to Aggarwal, Dadush \& Stephens-Davidowitz \cite{CVP_DiscreteGauss}. This paper extends the DGS sampling technique from \cite{SVP_DiscreteGauss} and solves \ref{CVP_def} by an $2^{n + o(n)}$-time algorithm.

The last direction that we want to refer concerns algorithms, which use \emph{the Voronoi cell of a lattice} -- the centrally symmetric polytope, corresponding to the points closer to the origin than to any other lattice point. This direction is started from the paper \cite{CVP_Slicing}, due to Sommer, Feder \& Shalvi. The seminal work of Micciancio \& Voulgaris \cite{SVP_exp}, which is built upon the approach of \cite{CVP_Slicing}, gives first known deterministic single exponential time algorithms for \ref{SVP_def} and \ref{CVP_def}. More precisely, it gives $2^{2n + o(n)}$-time algorithms. The space usage is $2^{n + o(n)}$. 

The existence of $2^{O(n)}$-time polynomial-space exact algorithms for \ref{SVP_def} or \ref{CVP_def} is the major open problem in the lattice algorithms field.

The works, mentioned above, are mainly concerned with the Euclidean norm $\norm{\cdot}_2$. Some results about \ref{SVP_def}-solvers for other norms are presented, for example, in \cite{SVP_Blomer_2009,DadushDis,DadushFDim,CoveringCubes}. The paper \cite{CoveringCubes} (see also the monograph \cite{DadushDis}) presents a general technique to extend any Euclidean norm solvers to arbitrary norms with an additional $2^{O(n)}$-time multiplicative factor.

\medskip
Now, let us discuss our motivation. All the algorithms, mentioned above, are \emph{fixed polynomial tractable} (FPT) with respect to the dimension $n$ parameter. In other words, a complexity bound of any of the algorithms above looks like $f(n) \cdot \poly(\size(A, q))$, where $f(n)$ is a computable function, depending only on $n$, and $\size(A,q)$ is the input encoding size. Is it possible to choose another parameterization? For example, could we build an algorithm, parameterized by the lattice determinant $\Delta$? The paper \cite{FPT_Grib} answers positively and gives an \mbox{$O\bigl(n^{\omega} \cdot \log(\Delta) + n \cdot \Delta^2 \cdot \log(\Delta) \bigr)$}-time dynamic programming algorithm for both \ref{SVP_def} and \ref{CVP_def} with respect to any $\norm{\cdot}_p$, for $p \geq 1$, where $w$ is the matrix multiplication exponent. In our work, we improve this running time to $O\bigl( n^{\omega}\cdot \log(\Delta) + m \cdot \Delta \cdot \log(\Delta) \bigr)$, where $m = \min\{n,\Delta\}$. The improvement consists just in careful using of the linear-time monotone convex $(\min,+)$-convolution algorithm, due to \cite{GraphKnap}. Our algorithm uses $O(\Delta)$ space.

Strictly speaking, we solve the following slightly more general problems than \ref{SVP_def} and \ref{CVP_def}.

Let $f \colon \ZZ_{\geq 0} \to \RR$ be a monotone and convex function. We define \emph{the generalized shortest lattice vector problem} in the following way:
\begin{equation}\tag{GENERALIZED-SVP}\label{gen_SVP_def}
    \min\Bigl\{\sum\limits_{i=1}^n f\bigl(\abs{x_i}\bigr) \colon x \in \inth(A)\setminus\{\BZero\} \Bigr\}.
\end{equation}
Clearly, the original $\ref{SVP_def}$ problem is equivalent to \ref{gen_SVP_def} with $f(x) = x^p$.

We also define \emph{the generalized closest vector problem} in the following way:
\begin{equation}\tag{GENERALIZED-CVP}\label{gen_CVP_def}
    \min\Bigl\{\sum\limits_{i=1}^n f\bigl(\abs{x_i - q_i}\bigr) \colon x \in \inth(A) \Bigr\}.
\end{equation}
Again, the original \ref{CVP_def} problem is equivalent to \ref{gen_CVP_def} with $f(x) = x^p$. 

In the following Table \ref{SVP_tb}, we group the state of the art results for different cases, mentioned above, together with our new result. All the algorithms from the table below are deterministic, except the sieving-based algorithm. Details could be found
in Theorems \ref{main_SVP_th} and \ref{main_CVP_th}.

\begin{table}[h!]
    \centering
    \begin{tabular}{||c|c|c|c||}
    \hline
    \hline
    Technique: & Time: & Space: & Reference: \\
    \hline
    \hline
    Enumeration & $n^{\frac{n}{2 e} + o(n)}$ & $n^{O(1)}$ & Micciancio \& Walter \cite{FastKannanSVP_Micc} \\
    \hline
    Sieving & $2^{n + o(n)}$ & $2^{n + o(n)}$ & Aggarwal, Dadush, Regev \& Stephens-Davidowitz \cite{SVP_DiscreteGauss,CVP_DiscreteGauss} \\
    \hline
    Voronoi cell & $2^{2n + o(n)}$ & $2^{n + o(n)}$ & Micciancio \& Voulgaris \cite{SVP_exp} \\
    \hline
    DP & $\tilde O(n^{\omega} + n \cdot \Delta^2 )$ & $O(n+\Delta)$ & Gribanov, Malyshev, Pardalos \& Veselov \cite{FPT_Grib} \\
    \hline
    DP & $\tilde O(n^{\omega} + m \cdot \Delta)$ & $O(n+\Delta)$ & {\color{red} this work} \\
    & & & $m = \min\{n, \Delta\}$ \\
    \hline
    \hline
    \end{tabular}
    
    \captionsetup{justification=centering}
    \caption{Complexity bounds for \ref{SVP_def} and \ref{CVP_def}}
    \label{SVP_tb}
\end{table}
\begin{remark}
Strictly speaking, the algorithms, presented in Theorems \ref{main_SVP_th} and \ref{main_CVP_th}, use of $O(n \cdot \Delta)$ space. But, they can be transformed to $O(\Delta)$-space algorithms without significant effort. Definitely, our algorithms use dynamic tables with $O(n \cdot \Delta)$ entries. It can be easily seen that if we want only to compute the optimal value of the objective function, then it is sufficient to store only one row of these tables at each computational step, which reduces the space requirement to $O(\Delta)$.

However, if we want to compute an optimal solution vector, then this simple trick is not applicable and more sophisticated technique need to be used. Such a technique is described, for example, in \cite[Paragraph~3.3]{KnapsackBook2004} (see also \cite{KnapsackDPTrick}), and it can be applied for our dynamic programming algorithms without any restrictions.
\end{remark}

\section{Data Structures}\label{data_str_sec}

In this Section, we describe data structures that will be used for our $(\min, +)$-convolution algorithms. The first two of them, the \emph{queue with minimum operation} and the \emph{compressed segment tree}, are classical. The third data structure is our modification of the segment tree, we call it the \emph{augmented segment tree}.  

\subsection{Queue with Minimum Support}\label{queue_subs}
The queue with minimum support is a classical data structure that looks to be folklore. We did not able to find any correct historical references on it. This data structure just represents a generic queue that stores elements of some linearly ordered set, but with an additional operation $min()$ that returns the current minimum. Let $Q$ be an instance of the queue, we list all the operations with their complexities:
\begin{enumerate}
    \item the operation $Q.min()$ returns a current minimum in $Q$. Its complexity is $O(1)$ in the worst case;
    \item the operation $Q.push(x)$ inserts an element $x$ to the tail of $Q$. Its complexity is $O(1)$ in the worst case;
    \item the operation $Q.pop()$ removes an element from the head of $Q$. Its complexity is $O(1)$ amortised.
\end{enumerate}

Since we are not able to give a correct reference to this data structure, we give a brief explanation of \emph{how it works}. First of all, note that it is easy to implement a stack with minimum support and with the worst-case complexity $O(1)$, for all the operations. To do this, we just need to create a second stack, which will store the current minimum. 

A queue $Q$ with minimum support can be implemented just by using two stacks $S_{h}$ and $S_{t}$ that are glued by the bottom side. The stack $S_{t}$ represents a tail of $Q$, and the stack $S_{h}$ represents a head of $Q$. Now, the operation $Q.min()$ can be implemented just by taking the minimum value between $S_{t}.min()$ and $S_{h}.min()$. When we need to insert a new element $x$ to $Q$, we just need to call $S_{t}.push(x)$. Finally, the operation $Q.pop()$ can be implemented in the following way: if $S_{h}$ is not empty, we just call $S_{h}.pop()$. If $S_{h}$ is empty, we move all the elements from $S_{t}$ to $S_{h}$ and call $S_{h}.pop()$ after that. Clearly, the worst case complexity of $Q.pop()$ is $O(n)$. But, since any element of $Q$ can be moved from $S_t$ to $S_h$ only ones, the amortised complexity of $Q.pop()$ is $O(1)$.

\subsection{Segment Tree}\label{segment_subs}

The segment tree is a classical data structure to perform the range minimum, the sum or update queries in sub-intervals of a given array, using only logarithmic worst-case time. We did not find any correct historical references on it, but a detailed description could be found in the internet \cite{SegmentTreeWeb}. The brief description of the weaker version without range update operations could be found in \cite[Section A.3]{LinearOperators}. Additionally, the work \cite{LinearOperators} gives a good survey and interesting new results about queries on arbitrary semigroups.

Let $T$ be an instance of a segment tree. We list the required operations with their complexities:
\begin{enumerate}
    \item the operation $T.build(A)$ builds the data structure on an array $A$ of length $n$. Its worst-case complexity is $O(n)$;
    \item the operation $T.min(i, j)$ returns a minimal element in the sub-array $A[i, j)$. Its worst-case complexity is $O(\log (n))$;
    \item the operation $T.add(i, j, x)$ adds the value of $x$ to all the elements of the sub-array $A[i, j)$. Its worst-case complexity is $O(\log (n))$. 
\end{enumerate}

Let us assume that $n = 2^d$. We call an interval $[i,j)$ \emph{basic}, if $[i,j) = [i \cdot 2^{d-k}, (i+1) \cdot 2^{d-k})$, for some $i \in \intint[0]{2^k - 1}$ and $k \in \intint[0]{d}$. The segment tree is represented as a full binary tree, where each node $v$ corresponds to some basic interval $[i_v, j_v)$ and additionally stores a minimum element in the sub-array $A[i_v, j_v)$. If $v$ is not leaf, then it has two children $a$ and $b$, corresponding to the intervals $[i_a, j_a) = [i_v, i_v + h)$ and $[i_v + h, j_v)$, where $h = (j_v-i_v)/2$. The leafs correspond to intervals of length $1$, associated with all the elements of $A$. If $v$ is a root node, we just have $[i_v, j_v) = [0, 2^d)$ and the minimum value in $v$ corresponds to the minimum in $A$.

The key idea that helps to compute the minimum value in a general interval $[i, j)$ is a special algorithm that splits a given interval $[i ,j)$ into at most $2 d$ basic intervals. We emphasise this in the following lemma, which will be used later.

\begin{lemma}\label{decomp_lm}
For any given interval $[i,j)$, there is an $O(d)$-complexity algorithm that splits $[i,j)$ into at most $2 d$ basic intervals, corresponding to the nodes of $T$. 
\end{lemma}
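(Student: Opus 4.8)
The plan is to give the standard recursive descent on the segment tree, carefully tracking how many basic intervals can be produced. I would describe the algorithm \texttt{split}$(v, [i,j))$ that takes the current node $v$ (with associated basic interval $[i_v, j_v)$) and the query interval $[i,j)$, and outputs the list of basic intervals whose disjoint union is $[i_v,j_v) \cap [i,j)$. The base cases: if $[i_v,j_v) \cap [i,j) = \emptyset$, return the empty list; if $[i_v,j_v) \subseteq [i,j)$, return $\{[i_v,j_v)\}$. Otherwise, $v$ is not a leaf (a leaf interval has length $1$, so it is either disjoint from or contained in $[i,j)$), and we recurse on both children $a,b$ of $v$ and concatenate the results. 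We call \texttt{split} initially on the root, whose interval is $[0,2^d)$.

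The correctness that the returned intervals are basic, pairwise disjoint, and union to $[i,j)$ is an easy induction on the height of $v$, using that a non-leaf node's two children partition its interval. The running time is $O(\text{number of visited nodes})$, so the crux is bounding that count, and equivalently the size of the output list, by $O(d)$. The key structural observation is: on each of the $d+1$ levels of the tree, at most two visited nodes produce output (are "fully inside"), and at most two visited nodes are "partial" (strictly straddling an endpoint of $[i,j)$) — in fact the partial nodes on a given level are exactly the at most two nodes whose intervals contain $i$ or $j-1$, since a node can be partial only if it contains a boundary point of $[i,j)$ but is not contained in $[i,j)$. I would prove by induction down the levels that the set of partial nodes on level $k$ has size at most $2$, and that every non-partial visited node on level $k$ is a child of a partial node on level $k-1$; hence at most $4$ visited nodes per level, at most $2$ of which emit an interval. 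Summing over $k \in \intint[0]{d}$ gives at most $2(d+1) \le 2d$ output intervals (a slightly more careful count — the root and at most one other level contribute only one partial node, or one notes the two "spines" from the root to the leaves containing $i$ and $j-1$ diverge at a single node — tightens this to exactly $2d$, matching the statement) and $O(d)$ total work.

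The main obstacle is getting the constant right: a naive "two partials and two fulls per level" bound gives roughly $4d$ visited nodes and up to $2d$ emitted intervals, which already suffices for the $O(d)$ complexity claim, but to hit the stated bound of \emph{at most} $2d$ basic intervals one should argue that above the level where the root-to-$i$ and root-to-$(j-1)$ paths split, there is a single partial node emitting nothing, and below that, each of the two paths contributes at most one emitted interval per level; I expect this to reduce to a short case analysis on whether $i$ and $j-1$ lie in the left or right child at the splitting node. Everything else — the disjointness and covering claims, the $O(d)$ per-node cost of computing intersections and locating children — is routine, so I would state those inductions crisply and spend the bulk of the written proof on the per-level counting argument.
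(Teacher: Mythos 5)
Your proposal is correct, and it is worth noting that the paper itself gives no proof of Lemma~\ref{decomp_lm} at all --- it is stated as a classical fact about segment trees with pointers to the literature --- so your recursive descent with the per-level count of ``partial'' versus ``fully contained'' nodes is precisely the standard argument the paper implicitly relies on. The one blemish is the line ``$2(d+1)\le 2d$'', which is false as written; your LCA refinement repairs it, but there is an even shorter fix that avoids the case analysis: a partial node can never have both children fully contained in $[i,j)$ (otherwise it would itself be contained), so the number of emitted nodes at level $k$ is at most the number of partial nodes at level $k-1$, which is at most $1$ at the root level and at most $2$ below it, giving at most $1+2(d-1)=2d-1\le 2d$ emitted intervals (and exactly one interval in the degenerate case where the root itself is emitted), with $O(d)$ visited nodes in total.
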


\subsection{Augmented segment tree}\label{aug_segment_subs}

Assume again that $n$ is a power of $2$. Let $A$ be an array of length $n$, and let $f(x) : [0,n) \to \RR$ be a function. Given $x \in \intint[0]{n-1}$ and $i,j \in \intint[0]{n}$, we are interested in the problem to efficiently compute a function 
\begin{multline*}
    \FC\bigl([i,j);\, x\bigr) = \\
    = \min\bigl\{ A[i] + f(x),\; A[i+1] + f(x+1),\; \dots,\; A\bigl[i +(j-i-1)\bigr] + f\bigl(x +(j-i-1) \bigr) \bigr\} = \\
    = \min\limits_{0 \leq k < j-i} \bigl\{ A[i + k] + f(x + k) \bigr\},
\end{multline*}
assuming that some preprocessing is done for $A$. To compute the values of $f$ for $x \in [0,n)$, the $\EVO$-oracle can be used (see Definition \ref{EVO_def}).

The following main property of the function $\FC\bigl([i,j);\, x\bigr)$ can be checked straightforwardly:
\begin{proposition}
Let an interval $[i,j)$ be partitioned into the intervals $[i_1, j_1)$ and $[i_2, j_2)$, i.e. $[i,j) = [i_1, j_1) \cup [i_2, j_2)$, where $i_2 \geq i_1$. Then,
\begin{equation}\label{F_property_eq}
    \FC\bigl([i,j);\, x\bigr) = \min\Bigl\{ \FC\bigl([i_1,j_1);\, x\bigr),\; \FC\bigl([i_2,j_2);\, x + j_1 - i_1\bigr) \Bigr\}.
\end{equation}
\end{proposition}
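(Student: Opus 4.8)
The plan is to verify the identity \eqref{F_property_eq} by straightforward unwinding of the definition of $\FC$, since both sides of the equation are minima of the same collection of terms, just grouped differently. First I would write $i_1 = i$, $j_1 = i + \ell_1$, $i_2 = i_1 = i + \ell_1$ (using that the two intervals partition $[i,j)$ with $i_2 \geq i_1$, so necessarily $i_2 = j_1$), and $j_2 = j$, so that the lengths are $\ell_1 = j_1 - i_1$ and $\ell_2 = j_2 - i_2 = j - i - \ell_1$, and $\ell_1 + \ell_2 = j - i$.

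Next I would expand the left-hand side directly from the definition:
\[
    \FC\bigl([i,j);\, x\bigr) = \min_{0 \leq k < \ell_1 + \ell_2} \bigl\{ A[i+k] + f(x+k) \bigr\}.
\]
I would then split the index range $\{0, 1, \dots, \ell_1 + \ell_2 - 1\}$ into the two disjoint blocks $\{0, \dots, \ell_1 - 1\}$ and $\{\ell_1, \dots, \ell_1 + \ell_2 - 1\}$, and use the elementary fact that a minimum over a disjoint union of index sets equals the minimum of the two sub-minima. The first block gives exactly $\min_{0 \leq k < \ell_1}\{A[i+k] + f(x+k)\} = \FC\bigl([i_1,j_1);\, x\bigr)$. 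For the second block I would reindex by $k = \ell_1 + k'$ with $0 \leq k' < \ell_2$, so that $A[i+k] = A[i_1 + \ell_1 + k'] = A[i_2 + k']$ and $f(x+k) = f\bigl((x + \ell_1) + k'\bigr) = f\bigl((x + j_1 - i_1) + k'\bigr)$; this block therefore equals $\min_{0 \leq k' < \ell_2}\bigl\{A[i_2 + k'] + f\bigl((x+j_1-i_1) + k'\bigr)\bigr\} = \FC\bigl([i_2,j_2);\, x + j_1 - i_1\bigr)$. Combining the two pieces yields \eqref{F_property_eq}.

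This proof is entirely routine — there is no real obstacle, only bookkeeping. The only point deserving a moment of care is the index shift in the second block: one must confirm that the argument of $f$ is shifted by exactly $j_1 - i_1$ (the length of the first interval) and not by $i_2 - i$ or some other quantity; these coincide precisely because the intervals partition $[i,j)$ and abut at $j_1 = i_2$. I would state this verification explicitly and then conclude that the proposition follows.
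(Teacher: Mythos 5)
Your proof is correct and is exactly the ``straightforward check'' the paper alludes to (the paper offers no written proof beyond asserting the property can be verified directly): split the index range of the defining minimum at $k = j_1 - i_1$ and reindex the second block. Nothing further is needed.
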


\begin{definition}
The interval $[a,b) \subseteq \RR$ is called \emph{integer} if its end-points $a$ and $b$ are integers.
\end{definition}

\begin{definition}
Let $f \colon [0,n) \to \RR$ and $\IC \subseteq [0,n)$ be an integer interval. Let $\IC_1, \IC_2, \dots, \IC_m$ be a set of integer intervals, such that
\begin{enumerate}
    \item the intervals $\IC_1, \dots, \IC_m$ partition $\IC$;
    \item for any $j$ and $x \in \IC_j$, we have either $f(x) \geq 0$, $f(x) \leq 0$, $f(x) > 0$ or $f(x) < 0$.
\end{enumerate}

A minimal set of such intervals $\IC_1, \dots, \IC_m$ is called a \emph{minimal sign partition of $f$}. The set of all such minimal partitions is denoted by $\PS(f, \IC)$.
\end{definition}

It is easy to see that a minimal sign partition is not unique.

\begin{definition}\label{SPO_def}
By $\SPO$ we denote the \emph{minimal sign partition oracle}. For given $f$ and $\IC \subseteq \dom(f)$, it returns some minimal sign partition from $\PS(f,\IC)$.
\end{definition}

Next, we need to define a special characteristic $p_f$ of a function $f \colon \IC \to \RR$, defined on an integer interval $\IC$, which will be extensively used further.
\begin{definition}
Let $f\colon \IC \to \RR$ be a function, defined on an integer interval $\IC$. Let us define a value $p_f$ in the following way. For $a \in \ZZ_{\geq 0}$ and $b \in \ZZ$, let us consider a function $g_{a b}(x) = f(x+a) - f(x) + b$. Let
\begin{gather*}
    p_f(a,b,\JC) = \max\left\{ \abs{\PC}\colon \PC \in \PS(g_{a b},\JC) \right\}\\
    p_f = \max\{p_f(a,b,\JC) \colon a \in \RR_{\geq 0},\, b \in \RR,\, \JC \subseteq \dom(g_{a b}) \}
\end{gather*}

In other words, $p_f$ is the maximal size of a minimal sign partition that $f(x + a) - f(x) + b$ can have on $\JC$, for the all possible values of $a,\, b$ and correct integer sub-intervals $\JC \subseteq \IC$.
\end{definition}

The following theorem defines the augmented segment tree data structure:
\begin{theorem}\label{main_atree_th}
Assume that $\EVO$ and $\SPO$ oracles are available. Let $f \colon [0,n) \to \RR$ be a function, $A$ be an array of length $n$, which is a power of $2$, and $p := p_f$.

There exists a data structure $T$, called the \emph{augmented segment tree}, that supports the following list of operations:
\begin{enumerate}
    \item the operation $T.build(A)$ builds the data structure for the array $A$ of length $n$. The worst-case $\SPO$-oracle and arithmetic complexities are $O(n^{\log_2 (p) + \frac{1}{1+\log_2 (p)}})$;
    \item the operation $T.query(i,j,x)$ returns the value of $\FC\bigl([i,j);\, x\bigr)$. The worst-case $\EVO$-oracle and arithmetic complexities are $O(\log^2(n))$.
\end{enumerate}
The data structure uses $O(n^{\log_2 (p) + \frac{1}{1+\log_2 (p)}})$ space. Calls to $\SPO$ oracle are performed for functions of the type $g(x) = f(x+a) - f(x+b) + c$, where $a,b \in \ZZ_{\geq 0}$, and $c \in \ZZ$. Calls to $\EVO$ are performed for $f$.
\end{theorem}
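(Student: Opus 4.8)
The augmented segment tree should be built as a recursive structure over the basic intervals of $A$, exactly like the ordinary segment tree of Subsection~\ref{segment_subs}, but with each node $v$ (corresponding to a basic interval $[i_v, j_v)$ of length $2^{d-k}$) storing not a single number but a compact data structure that, for any query offset $x$, returns $\FC\bigl([i_v,j_v);\, x\bigr)$ in polylogarithmic time. The idea is that $\FC\bigl([i_v,j_v);\, x\bigr)$, viewed as a function of the real parameter $x$, is the lower envelope of the $2^{d-k}$ shifted graphs $A[i_v+\ell] + f(x+\ell)$; I would store, for node $v$, a sorted list of breakpoints of this lower envelope together with, on each resulting piece, a pointer to the index $\ell$ achieving the minimum. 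A query $T.query(i,j,x)$ then uses Lemma~\ref{decomp_lm} to split $[i,j)$ into at most $2d = O(\log n)$ basic intervals, evaluates each of the $O(\log n)$ corresponding node-envelopes at the appropriate shifted argument by binary search (one $\EVO$ call to read off $f$ at the winning index, $O(\log n)$ comparisons), and combines the $O(\log n)$ candidate values via~\eqref{F_property_eq}; this gives the claimed $O(\log^2 n)$ $\EVO$-oracle and arithmetic bounds for $T.query$.

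The substance is the $build$ operation and, in particular, bounding the total size of all the stored envelopes. At node $v$ the envelope of $\FC\bigl([i_v,j_v);\, x\bigr)$ is obtained by \emph{merging} the envelopes of its two children $a,b$: by~\eqref{F_property_eq}, $\FC_v(x) = \min\{\FC_a(x),\, \FC_b(x + h)\}$ where $h = (j_v-i_v)/2$, so I first shift child $b$'s breakpoint list by $-h$ and then take the pointwise minimum of two piecewise "shifted-$f$" functions. The number of breakpoints of this minimum is bounded by the sum of the numbers of breakpoints of the two pieces \emph{plus} the number of crossings between a piece of one and a piece of the other; a crossing between the piece with winning index $\ell$ (value $A_\ell + f(x+\ell)$) and the piece with winning index $\ell'$ (value $A_{\ell'} + f(x+\ell')$) is a sign change of $g(x) = f(x+\ell) - f(x+\ell') + (A_\ell - A_{\ell'})$, which is exactly a function of the form controlled by $p_f$. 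This is where the $\SPO$ oracle enters: one call to $\SPO$ on $g$ restricted to the overlap interval returns a sign partition of size at most $p := p_f$, hence at most $p-1$ crossings per pair of adjacent pieces. Writing $S(k)$ for the maximum envelope size at a node of height $k$ (interval length $2^{k}$), the merge recurrence becomes roughly $S(k) \le p\cdot S(k-1)$ with $S(0) = 1$, but a more careful amortized count — charging crossings to the coarser of the two sides and summing a geometric-type series, exactly the standard Davenport–Schinzel-style bookkeeping — yields the bound $S(d) = O\bigl(n^{\log_2 p}\bigr)$ and, after optimizing the point at which one switches from the recursive envelope representation to a brute-force $O(n)$-per-query fallback at the bottom levels, the stated exponent $\log_2 p + \tfrac{1}{1+\log_2 p}$ for both the build time and the space.

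More concretely for the build: process the tree bottom-up; at each of the $n/2^{k}$ nodes of height $k$, the merge costs $O(S(k-1))$ arithmetic operations plus $O(S(k-1))$ calls to $\SPO$ for the pairwise-crossing computations, so the total over height $k$ is $O\bigl((n/2^{k})\cdot S(k)\bigr)$, and summing over $k$ is dominated by the top, giving $O(S(d)) = O\bigl(n^{\log_2 p}\bigr)$; the extra $\tfrac{1}{1+\log_2 p}$ in the exponent comes from the standard trick of cutting the tree at height $\log_2 p \cdot d / (1+\log_2 p)$ and handling each bottom subtree by a direct $O(\text{subtree size})$-time-per-query linear scan rather than by storing an envelope, which balances the two regimes. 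Finally, I would note that every $\SPO$ call is indeed made only for functions $g(x) = f(x+a) - f(x+b) + c$ with $a,b\in\ZZ_{\ge 0}$, $c\in\ZZ$ — since the $A$-entries and the index shifts $\ell,\ell'$ are integers — and every $\EVO$ call is made only for $f$ itself, as asserted.

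\medskip
\noindent\textbf{Main obstacle.} The delicate point is the amortized size bound $S(d) = O(n^{\log_2 p})$: a naïve "each merge multiplies the size by $p$" argument gives exactly this, but proving it is \emph{tight enough} that the build time — which must also account for the cost of all the $\SPO$ calls and of re-sorting/merging breakpoint lists at every internal node — does not blow up beyond the claimed exponent requires the careful geometric-series amortization and the height-cutoff optimization described above. A secondary subtlety is making precise the claim that crossings of two envelope pieces are governed by $p_f$ via a single $\SPO$ call on a function of the form $f(x+a)-f(x+b)+c$; this is immediate from the definition of $p_f$ once one reduces $f(x+\ell)-f(x+\ell')+(A_\ell-A_{\ell'})$ to the normalized shape $g_{ab}(x) = f(x+a)-f(x)+b$ by translating the argument, but it must be stated carefully so that the $O(\log^2 n)$ query bound and the space bound both come out with the stated constants in the exponent.
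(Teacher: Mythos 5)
Your data structure and query procedure are essentially the paper's: each node stores the lower envelope of $\FC\bigl([i_v,j_v);\,x\bigr)$ as a partition of the offset domain $[0,n)$ into integer intervals, each carrying a winning function of the form $A[j]+f(x+t)$; envelopes are merged bottom-up with one $\SPO$ call per pair of overlapping child pieces; and a query decomposes $[i,j)$ into $O(\log n)$ basic intervals via Lemma~\ref{decomp_lm} and binary-searches each node's breakpoint list, giving $O(\log^2 n)$. The gap is in the build analysis. The merge recurrence is $S(k)\le\bigl(2S(k-1)-1\bigr)\cdot p$, hence $S(k)\le(2p)^k=(2^k)^{1+\log_2 p}$, and no Davenport--Schinzel-style amortization of the merges extracts $S(d)=O(n^{\log_2 p})$ from the recurrence alone (for $p=2$ that would assert $S(d)=O(n)$ while the recurrence gives $n^2$). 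What actually rescues the bound is the elementary observation that an envelope is a partition of the $n$-point integer domain $[0,n)$ and therefore has at most $n$ pieces, so $S(k)\le\min\bigl\{(2^k)^{1+\log_2 p},\,n\bigr\}$. Summing the per-level cost $(n/2^k)\cdot O(S(k))$ over all levels of the \emph{unmodified} tree, splitting the sum at the crossover $2^k\approx n^{1/(1+\log_2 p)}$, already yields $O\bigl(n^{\log_2 p+\frac{1}{1+\log_2 p}}\bigr)$; this is exactly the paper's computation, and no algorithmic change is required.

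Your proposed height cutoff with an "$O(\text{subtree size})$-per-query linear scan" in the bottom subtrees is therefore unnecessary, and it is also harmful: the basic-interval decomposition of a general query $[i,j)$ contains nodes at essentially every height down to the leaves whenever $i,j$ are not aligned to large powers of two, so answering bottom-level basic intervals by a linear scan makes $T.query$ cost $\Omega(B)$ for $B$ the cutoff subtree size (polynomial in $n$), destroying the claimed $O(\log^2 n)$ bound. Note also that the direction of your balancing intuition is off: the per-level build cost $np^k$ \emph{increases} going up until the crossover and then decreases as $n^2/2^k$, so the leaves and the root are the cheap levels and truncating the bottom of the tree does not reduce the asymptotic build cost. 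Drop the cutoff, store an envelope at every node, and replace the unjustified amortization by the $\min\{(2p)^k,n\}$ cap; the rest of your argument (the $\SPO$ calls on $f(x+a)-f(x+b)+c$, the $p$ pieces per overlap, and the query routine) is sound.
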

\begin{proof}

{\bf Description of data structure:} Our new data structure is a common segment tree $T$ with some additional augmentations. Here we will use the same notations, as in Subsection \ref{segment_subs}. 

We augment each vertex $v$ of $T$ with an additional data, represented by a finite set $\GS_v$ of functions $g : \DC_g \cap \ZZ \to \ZZ$, where each domain $\DC_g$ is an integer sub-interval of $[0,n)$ and $g$ acts on $\DC_g$ as a function $g(x) = A[j] + f(x + t)$, for some $j,t \in \intint[0]{n-1}$. 

We will support the following four invariants, for any $v \in T$:
\begin{itemize}
    \item Invariant 1: the intervals $\DC_g$, for any $g \in \GS_v$, must split $[0,n)$:
$$
[0,n) = \bigsqcup\limits_{g \in \GS_v} \DC_g;
$$

    \item Invariant 2: for any $x \in [0,n)$, there exists a unique function $g \in \GS_v$, such that $x \in \DC_g$, and 
    $$
    \FC\bigl([i_v, j_v); x\bigr) = g(x);
    $$
    
    \item Invariant 3: the functions $g \in \GS_v$ are stored in the sorted order with respect to the end-points of their domains $\DC_g$;
    
    \item Invariant 4: for any $g \in \GS_v$, the function $g(x)$ acts on $\DC_g$ like $g(x) = A[j] + f(x + t)$, for $j,t \in \intint[0]{n-1}$;  
\end{itemize}

{\bf Description and analysis of the query operation for basic intervals:} For the definition of \emph{basic intervals}, see Subsection \ref{segment_subs}. Assume that a vertex $v \in T$ is given, and we want to perform the $query(i_v, j_v, x)$ operation with respect to the basic interval $[i_v, j_v)$. Due to Invariant 2, we just need to find an appropriate function $g$ from the set $\GS_v$. Due to Invariant 3, the function $g$ can be found in $O(\log (n))$ time, because $\GS_v$ contains at most $n$ functions. Due to Invariant 4, $g(x)$ looks like $A[j] + f(x + t)$, so it can be computed, using a single call to $\EVO$. The total complexity is $O(\log_2(n))$.

{\bf Description and analysis of the query operation for general intervals:} Assume that an interval $[i,j)$ is given, and we want to perform the $query(i, j, x)$ operation. Due to Lemma \ref{decomp_lm}, there exist $m \leq 2 \log_2(n)$ vertices $v_1, v_2, \dots, v_m \in T$, such that $[i,j)$ is partitioned into the basic intervals $[i_{v_k}, j_{v_k})$, for $k \in \intint m$. Let us assume that $i_{v_1} < i_{v_2} < \dots < i_{v_m}$, and let $h_k = i_{v_k} - i_{v_1}$. Due to the property \eqref{F_property_eq}, we have
\begin{multline*}
    \FC\bigl([i,j);\, x\bigr) = \\
    = \min\Bigl\{\FC\bigl([i_{v_1},j_{v_1});\, x + h_1\bigr), \dots, \FC\bigl([i_{v_m},j_{v_m});\, x + h_m\bigr)\Bigr\} = \\
    = \min\limits_{k \in \intint m} \Bigl\{ \FC\bigl([i_{v_k},j_{v_k});\, x + h_k\bigr) \Bigr\}.
\end{multline*}

Consequently, due to the complexity bound on queries for basic intervals, the complexity of the $query(i,j,x)$ operation is $O(\log^2(n))$.

{\bf Description and analysis of the preprocessing:}

First of all, let us construct the standard segment tree $T$, described in Subsection \ref{segment_subs}, for the array $A$. It will take $O(n)$ time and space.

We need to show how to compute $\GS_v$, for any $v \in T$, and satisfy all the invariants. The algorithm is recursive: it starts from the leafs, and moves upper, until it meats the root of $T$. Let $v$ be a leaf. Since $j_v - i_v = 1$, $\FC\bigl([i_v,j_v); x\bigr) = A[i_v] + f(x)$. Consequently, $\GS_v$ consists of only one function $g(x) = A[i_v] + f(x)$, and $\DC_g = [0, n)$. 

Next, we assume that $v$ is not a leaf, and let $u$ and $w$ be the children of $v$. We will show how the set $\GS_v$ can be constructed from the sets $\GS_u$ and $\GS_w$, based on the formula 
\begin{equation}\label{basic_reduction_eq}
    \FC\bigl([i_v,j_v);\, x\bigr) = \min\Bigl\{
    \FC\bigl([i_u,j_u);\, x\bigr),\;
    \FC\bigl([i_w,j_w);\, x + j_u-i_u\bigr)
    \Bigr\},
\end{equation}
which is a direct application of \eqref{F_property_eq}.

Let $\PC_u$ and $\PC_w$ be the sets of end-points of intervals, representing the domains of functions inside $\GS_u$ and $\GS_w$. We assume that $0, n \in \PC_u$ and $0, n \in \PC_w$. Clearly, $\abs{\PC_u} = \abs{\GS_u} + 1$ and $\abs{\PC_w} = \abs{\GS_w} + 1$. Due to Invarinat 3, we can assume that $\PC_u$ and $\PC_w$ are sorted. Next, we merge $\PC_u$ and $\PC_w$ into $\PC_v$, maintaining the same sorting order, and remove the duplicates. The last step can be done in $O(\abs{\GS_u} + \abs{\GS_w})$-time, since $\PC_u$ and $\PC_w$ are sorted. Since the points $0,n$ are common for both $\PC_u$ and $\PC_v$, we have 
\begin{equation}\label{PCv_size}
    \abs{\PC_v} \leq \abs{\GS_u} + \abs{\GS_w}.
\end{equation}

Take a pair $\nu, \tau$ of consecutive points in $\PC_v$. Due to Invariant 2, there exist unique functions $g_u \in \GS_u$ and $g_w \in \GS_w$, such that $[\nu, \tau) \subseteq \DC_{g_u} \cap \DC_{g_w}$. Due to the formula \eqref{basic_reduction_eq}, for $x \in [\nu, \tau)$, we have 
$$
\FC\bigl([i_v,j_v);\, x\bigr) = \min\bigl\{
    g_u(x),\;
    g_w(x + j_u-i_u)
    \bigr\}.
$$

Let $h(x) = g_u(x) - g_w(x + j_u - i_u)$, defined on $[\nu, \tau)$. Due to Invariant 4, the function $h(x)$ has the form $f(x+a) - f(x+b) + c$, for some $a,b \in \ZZ_{\geq 0}$ and $c \in \ZZ$.

To efficiently precompute $\FC\bigl([i_v,j_v); x\bigr)$ for $x \in [\nu, \tau) \cap \ZZ$, we need to compute a minimal sign partition $\SC \in \PS(h, [\nu, \tau))$. It can be done by a single call to $\SPO$. Now, for any interval $\IC \in \SC$, if $h(x) \geq 0$ on $\IC$, then $\FC\bigl([i_v,j_v); x\bigr) = g_u(x)$ and $\FC\bigl([i_v,j_v); x\bigr) = g_w(x + j_u - i_u)$ in the opposite case $h(x) \leq 0$. Consequently, for any such interval $\IC$, we create a new function $g_{\IC}$ and put it inside $\GS_v$ in the sorted order with respect to endpoints of $\IC$. Hence, the interval $[\nu, \tau)$ will be decomposed into at most $p$ new sub-intervals, and the same number of new functions will be added into $\GS_v$.

Now, let us estimate the time and space requirements to build the set $\GS_v$. As it was shown before, for any pair $[\nu,\tau)$ of consecutive points from $\PC_v$, we add at most $p$ functions to $\GS_v$.
Therefore, due to \eqref{PCv_size}, we have 
\begin{equation*}
    \abs{\GS_v} \leq \bigl(\abs{\GS_u} + \abs{\GS_w} - 1\bigr) \cdot p.
\end{equation*}
Denote $N(m) = \max\bigl\{\abs{\GS_v} \colon v \in T,\; j_v - i_v = m \bigr\}$, for $m \leq n$ being a power of $2$. Since $N(1) = 1$, we have 
\begin{equation*}
    N(m) \leq 2 \cdot N(m/2)\cdot p \leq (2p)^{\log_2 (m)} = m^{1 + \log_2 (p)}.  
\end{equation*}
And, since we always work in the interval $[0,n)$, 
\begin{equation}\label{nodes_num_eq}
    N(m) \leq \min\{m^{1 + \log_2 (p)},\; n\}.
\end{equation}

By analogy with $N(m)$, let us denote the maximal time to construct $\GS_v$ (in the assumption that $\GS_u$ and $\GS_w$ are already constructed) by  $t_{node}(m)$, where $m = j_v - i_v$. By the word "time", we mean both arithmetical and oracle complexities. Clearly, the definition is correct, because the value of $m$ is the same for all the vertices of the same level in $T$. Since the complexity to compute $\GS_v$ is linear with respect to the resulting size of $\GS_v$, due to \eqref{nodes_num_eq},
\begin{equation}\label{tnode_eq}
t_{node}(m) = O\bigl( N(m) \bigr)= O\bigl( \min\{m^{1 + \log_2 (p)},\; n\} \bigr).    
\end{equation}
Note additionally that the space requirements to store $\GS_v$ with the whole information, related to $v$, can be described by the same function $t_{node}(m)$.

Now, let us compute the total time and space complexity to construct the final augmented tree $T$. It can be expressed by the function 
$$
t(n) = \sum\limits_{k = 0}^{\log_2 (n)} 2^k \cdot t_{node}(n/2^k).
$$

Let $s = \Bigl\lceil \log_2\bigl( n^{\frac{1}{1+\log_2 (p)}} \bigr) \Bigr\rceil$. To calculate the asymptotic of $t(n)$, we split the sum into two parts and estimate elements of each sum, using \eqref{tnode_eq}:
\begin{multline*}
    t(n) \lesssim \sum\limits_{k = 0}^{s} 2^k \cdot n + \sum\limits_{k = s+1}^{\log_2 (n)} 2^k \cdot (n/2^k)^{1 + \log_2 (p)} \lesssim \\ 
    \lesssim n^{1 + \frac{1}{1 + \log_2(p)}} + n^{1 + \log_2 (p)} \cdot \sum\limits_{k = s+1}^{\log_2 (n)} 2^{- k \cdot \log_2 (p)}.
\end{multline*}

Estimating the sum at the end of the last formula, we have:
\begin{multline*}
    \sum\limits_{k = s+1}^{\log_2 (n)} 2^{- k \cdot \log_2 (p)} = \sum\limits_{k = s+1}^{\log_2 (n)} p^{- k}
    = \frac{p}{p-1} \cdot \Bigl( \frac{1}{p^{1+s}} - \frac{1}{p^{1+\log_2 (n)}} \Bigr) = \\
    = \frac{1}{p-1} \cdot \Bigl( \frac{1}{p^{s}} - \frac{1}{p^{\log_2 (n)}} \Bigr) \leq 
    \frac{1}{p-1} \cdot \Bigl( \frac{1}{n^{\frac{\log_2 (p)}{1+\log_2 (p)}}} - \frac{1}{n^{\log_2 (p)}} \Bigr).
\end{multline*}

Finally, the total time and space requirements can be estimated as follows 
\begin{multline*}
    t(n) \lesssim n^{1 + \frac{1}{1 + \log_2 (p)}} + \frac{1}{p} \cdot n^{1 + \log_2 (p) - \frac{\log_2 (p)}{1+\log_2 (p)}} = \\ 
    = n^{1 + \frac{1}{1+ \log_2 (p)}} \cdot \Bigl( 1 + \frac{1}{p} \cdot n^{-1 + \log_2 (p)} \Bigr) = \\
    = O\Bigl( n^{\log_2 (p) + \frac{1}{1 + \log_2 (p)} } \Bigr).
\end{multline*}

Theorem \ref{main_atree_th} is proved. $\Box$
\end{proof}

\section{Structured $(\min,+)$-convolution algorithms}\label{conv_sec}

In this Section, we describe how to solve the problem \ref{min_dot_def} for the following cases:
\begin{enumerate}
    \item $f$ is linear: $f(x) = \alpha \cdot x + \beta$;
    
    \item $f$ is piece-wise linear: $f(x)$ is represented by a piece-wise linear function with $p$ pieces;
    
    \item $f \in \ZZ^d[x]$;
    
    \item $f$ is concave. 
\end{enumerate}

\subsection{The linear case}\label{linear_conv_subs}

W.l.o.g. we can assume that $b_i = \alpha \cdot i$, for some $\alpha \in \QQ$. We will use a queue $Q$, which was described in Subsection \ref{queue_subs}.
The algorithm consists of $m-n$ steps: 

At the first step, we just initialise $Q$ with the elements 
$$
a_0 + b_0,\; a_1 + b_1,\; \dots,\; a_{m-1} + b_{m-1},
$$ which can be done in $O(m)$-time. After that, we assign $c_0 := Q.min()$, which can be done in $O(1)$-time.

Note that the difference between elements in $Q$ is exactly $\alpha$. We will support the following invariant:
$$
\boxed{
\begin{gathered}
\text{after the $k$-th step the queue $Q$ contains the following elements:}\\
a_k + \alpha \cdot k,\; a_{k+1} + \alpha \cdot (k+1),\; \dots,\; a_{k+m-1} + \alpha \cdot (k + m-1).
\end{gathered}
}
$$

Assuming that the $k$-th step has been done and $c_k$ has been computed, let us show how to perform the $(k+1)$-th step. We call the $Q.pop()$ and after that call $Q.push(x)$, for $x = a_{k+m} + \alpha \cdot (k+m)$. The last operations will satisfy the invariant at the $(k+1)$-th step. Now, we can put $c_{k+1} := Q.min() - \alpha\cdot(k+1)$, due to the invariant, it is the correct value of $c_{k+1}$.

Since the amortised complexity of each step is $O(1)$, the total arithmetical complexity bound is $O\bigl(m + (n-m)\bigr) = O(n)$.

\subsection{The piece-wise linear case}\label{plinear_subs}

W.l.o.g. we can assume that $f(x)$ is defined on $[0,m)$ by the following three vectors: $\alpha, \beta \in \QQ^p$, and $u \in \QQ_{\geq 0}^{p+1}$. We assume, that $u_0 = 0$, $u_{p} = n$, and $u_{j-1} < u_j$, for $j \in \intint{p}$. The formula for $f$ is:
$$
f(x) = \beta_k + \alpha_k \cdot x, \quad \text{for $x \in [u_{k-1}, u_k)$}.
$$

Assuming $b_i = f(i)$, let us show how to compute the elements of $c$. We will use the compressed segment tree data structure $T$, described in Subsection \ref{segment_subs}. The algorithm consists of $n-m$ steps:

At the first step, we construct the array $A := a$ and assign $A[i] := A[i] + b_i$, for all $i \in \intint[0]{m-1}$. It takes $O(n)$ arithmetic operations. Next, we initialise $T$, by calling $T.build(A)$. It also takes $O(n)$-time. 

We will support the following invariant: 
$$
\boxed{
\begin{gathered}
\text{after the $k$-th step has been done, the sub-array $A[k, k+m)$ consists of the elements}\\
a_k + b_0,\; a_{k+1} + b_1,\; \dots,\; a_{k+m-1} + b_{m-1}.
\end{gathered}
}
$$
Consequently, the equality $T.query(k, k+m) = c_k$ holds after the $k$-th step has been finished.

Now, let us show how to perform the $(k+1)$-th step with the complexity $O(p \cdot \log (n))$. Fix a number $j \in \intint{p}$ and consider the sub-array $A[k+1+u_{j-1}, k+ 1 + u_{j})$. Let $d_j = u_j - u_{j-1}$. By the invariant, the first $d_{j} - 1$ elements of this array are equal to 
$$ 
a_{k+1 + u_{j-1}} + \beta_j + \alpha_{j} \cdot u_{j-1},\; \dots,\; a_{k+u_{j}-1} + \beta_j + (u_{j}-1) \cdot \alpha_j.
$$ The last element $A[k + u_{j}]$ is equal to $a_{k+u_{j}} + f(u_{j}) = a_{k+u_{j}} + \beta_{j+1} + \alpha_{j+1}\cdot u_{j}$. Consequently, to make the first $d_j -1$ elements of the sub-array to satisfy the invariant, we need to make the $update(k+1+u_{j-1}, k+ u_{j}, -\alpha_j)$ operation, which can be done in $O(\log(n))$-time. Since $k+u_j = (k+1) + (u_j - 1)$, the last element $A[k + u_{j}]$ must be assigned to $a_{k+u_{j}} + f(u_{j}-1)$, which can also be done in $O(\log(n))$-time. After applying this procedure for all $k \in \intint{p}$, the invariant for the $k+1$-th step will be satisfied, and the value of $c_{k+1}$ can be computed just by using the $query(k+1, k+1+m)$ operation. The complexity of the considered step is $O(p \cdot \log(n))$.

The total arithmetic complexity of the algorithm is $O\bigl(n + (n-m) \cdot p \cdot \log(n) \bigr) = O\bigl(p \cdot n \cdot \log(n)\bigr)$.

\subsection{The polynomial and concave cases}\label{poly_subs}

Let us consider a function $f \colon [0,m) \to \RR$. We assume that $\EVO$ and $\SPO$ oracles are supported.

Let us estimate the oracle and arithmetic complexities to solve  \ref{min_dot_def} with $b_i = f(i)$. We will use the augmented segment tree data structure, described in Subsection \ref{aug_segment_subs}. Let us assume that $n$ be a power of $2$ and choose a block size $B$, which is also be a power of $2$. 

We assign $A := a$ and split $A$ into $n/B$ consecutive blocks of size $B$. Let $\BC_j$ be the interval representing the $j$-th block, i.e. $\BC_j = [(j-1)\cdot B, j \cdot B)$. Now, for each $j \in \intint{n/B}$, we construct the augmented segment tree data structure $T_j$ by calling the operation $T_j.build(A[\BC_j])$. Due to Theorem \ref{main_atree_th}, the oracle, arithmetic, and space complexities of this step can be expressed by the formula
\begin{equation}\label{block_prep_complexity_eq}
    O\Bigl( \frac{n}{B} \cdot B^{\log_2 (p) + \frac{1}{1 + \log_2 (p)}} \Bigr) 
    = O\Bigl( n \cdot B^{\sigma - 1} \Bigr),
\end{equation} where $\sigma = \log_2 (p) + \frac{1}{1 + \log_2 (p)}$.

The algorithm consist of $n-m$ steps. At the $k$-th step we try to compute the value of $c_k$, using the hint that $c_k = \FC\bigl([k, k+m);\,0\bigr)$.

Let $\BC_{j}, \dots, \BC_{j + t}$ be the consecutive blocks, affected by the window $[k, k+m)$, where $t = O(m/B)$. If $t = 1$, then we can just put $c_k := T_{j}.query(\nu, \tau, 0)$, where $\nu = k-B\cdot (j-1)$ and $\tau = k-B\cdot (j-1) + m$ are the relative coordinates of the $[k, k+m)$ window in $\BC_{j}$.

Now, let us consider the case $t \geq 2$. Let $s = j \cdot B - k$ be the size of the intersection of $\BC_j$ with $[k, k+m)$. Using Proposition \eqref{basic_reduction_eq}, we have
\begin{multline*}
    c_k := \FC\bigl([k,k+m);\, 0\bigr) = \min\Bigl\{ \\
    \FC\bigl(\BC_j \cap [k,k+m);\, 0\bigr),\; \FC\bigl(\BC_{j+1};\, s \bigr),\; \FC\bigl(\BC_{j+2};\, s + B \bigr),\;\dots \\
    \FC\bigl(\BC_{j+t-1};\, s + B \cdot (t-2) \bigr),\; \FC\bigl(\BC_{j+t} \cap [k,k+m);\, s + B \cdot (t-1) \bigr)
    \Bigr\}
\end{multline*}

The interval $\BC_j \cap [k,k+m)$ is a suffix of $\BC_j$. So, the first value $$\FC\bigl(\BC_j \cap [k,k+m);\, 0\bigr)$$ can be computed by a call to $T_{j}.query\bigl(B-s,\, B,\, 0\bigr)$. 

The interval $\BC_{j + t} \cap [k,k+m)$ is a prefix of $\BC_{j+t}$. So, the last value $$\FC\bigl(\BC_{j + t} \cap [k,k+m);\, s + B \cdot (t-1)\bigr)$$ can be computed by a call to $$T_{j+t}.query\bigl(0,\, k+m - B\cdot(t-1),\, s + B \cdot (t-1)\bigr).$$ Here $k+m - B\cdot(t-1)$ is the size of $\BC_{j + t} \cap [k,k+m)$. 

The intermediate values $\FC\bigl(\BC_{j + i};\, s + B \cdot (i-1)\bigr)$, for $i \in \intint[1]{t-1}$, can be computed by calls to $T_{j+i}.query(0,\,B,\,s + B \cdot (i-1))$.

Therefore, $c_k$ can be computed as the minimum between all this values. The respective arithmetic and oracle complexity is expressed by 
\begin{equation*}\label{block_k_complexity_eq}
O\bigl(\frac{m}{B} \cdot \log^2 (B)\bigr).
\end{equation*}

Consequently, the total complexity of $m-n$ steps without the initial preprocessing can be estimated as
$$
O\Bigl( (n-m) \cdot \frac{m}{B} \cdot \log^2(m) \Bigr) = O\Bigl( \frac{n^2}{B} \cdot \log^2(n) \Bigr).
$$

Now, the total algorithm complexity (together with the preprocessing, see the formula \eqref{block_prep_complexity_eq}) can be expressed by the formula
\begin{equation*}
    O\Bigl( n \cdot B^{\sigma - 1} + \frac{n^2}{B} \cdot \log^2(n) \Bigr).
\end{equation*}

Actually, a more detailed formula holds
\begin{equation}\label{block_full_complexity_eq}
    O\Bigl( n \cdot B^{\sigma - 1} \cdot T_{\SPO} + \frac{n^2}{B} \cdot \log^2(n) \cdot T_{\EVO} \Bigr).
\end{equation}

We will try to balance this formula, solving the equation 
\begin{equation*}
    n \cdot B^{\sigma - 1} = \frac{n^2}{B}.
\end{equation*}
Clearly, it is equivalent to
\begin{equation*}
    B^{\sigma} = n.
\end{equation*}
So, the $B$ parameter could be chosen as $B = n^{\frac{1}{\sigma}}$. 
After this substitution, the total complexity becomes
\begin{equation}\label{gen_minconv_complexity_eq}
O\Bigl( \bigl( T_{\SPO} + \log^2(n) \cdot T_{\EVO} \bigr) \cdot n^{1 + \frac{\sigma-1}{\sigma}} \Bigr),    
\end{equation}

Due to \eqref{block_prep_complexity_eq}, the space complexity is
\begin{equation}\label{gen_minconv_space_eq}
    O\Bigl(n^{1 + \frac{\sigma-1}{\sigma}} \Bigr).
\end{equation}

Now, let us consider the polynomial case.
\begin{theorem}\label{minconv_poly_th}
Let $f \in \ZZ^d[x]$. Then, \ref{min_dot_def} can be solved by an algorithm with the arithmetic complexity bound
$$
O\bigl( d^3 \cdot n^{1 + \frac{\sigma-1}{\sigma}} \cdot \log^2(n) \bigr),
$$ where $\sigma = \log_2(d) + \frac{1}{1+\log_2(d)}$.
\end{theorem}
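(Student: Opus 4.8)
The plan is to apply the general block-decomposition scheme established above in formulas \eqref{block_prep_complexity_eq}--\eqref{gen_minconv_space_eq}, which reduces everything to two tasks: (i) bounding the characteristic $p_f$ for a degree-$d$ polynomial, and (ii) implementing the $\SPO$ oracle efficiently (with the $\EVO$ oracle being trivial — polynomial evaluation by Horner's rule costs $O(d)$ arithmetic operations). Once $p := p_f$ is bounded by a constant depending only on $d$, the exponent $\sigma = \log_2(p) + \frac{1}{1+\log_2(p)}$ in \eqref{gen_minconv_complexity_eq} will have the claimed form, and the arithmetic cost of the $\SPO$ calls folds into the $d^3$ factor.

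First I would bound $p_f$. Recall $p_f$ is the maximal size of a minimal sign partition of a function of the form $g_{ab}(x) = f(x+a) - f(x) + b$ over an integer sub-interval. If $f \in \ZZ^d[x]$ has degree at most $d$, then $f(x+a) - f(x)$ has degree at most $d-1$, so $g_{ab}$ is a polynomial of degree at most $d-1$; hence it has at most $d-1$ real roots, and therefore at most $d$ maximal intervals of constant (strict) sign, plus we must account for the intervals where $g_{ab}$ vanishes identically or is zero at isolated integer points. A careful count shows a minimal sign partition has size $O(d)$, so $p_f = O(d)$; after absorbing the constant, $\log_2(p) \le \log_2(d) + O(1)$, and I would argue that the $O(1)$ additive slack only changes $\sigma$ in lower-order terms, so we may write $\sigma = \log_2(d) + \frac{1}{1+\log_2(d)}$ as stated. (This is the kind of constant-chasing that I would state cleanly but not belabor.)

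Next I would implement $\SPO$ for a polynomial $g$ of degree $\le d-1$ on an integer interval $\JC$. The sign partition is determined by the real roots of $g$; since $g$ has degree $< d$, one can isolate its real roots and determine the sign of $g$ on each resulting sub-interval, then intersect with $\JC$ and round the breakpoints to integers, checking signs at the finitely many integer points where $g$ might vanish. Root isolation / sign determination for a degree-$d$ polynomial with the arithmetic model of Remark \ref{complexity_rm} costs $O(d^3)$ arithmetic operations (e.g. via a Sturm-sequence or subresultant computation, whose cost is cubic in the degree in the worst case). This $O(d^3)$ factor multiplies the $\SPO$-term in \eqref{gen_minconv_complexity_eq}; the $\EVO$-term contributes only an $O(d)$ factor absorbed into $\log^2(n)$ up to constants, so overall we get $O(d^3 \cdot n^{1 + \frac{\sigma-1}{\sigma}} \cdot \log^2(n))$.

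The main obstacle is the $\SPO$ implementation and its honest arithmetic cost: one must verify that sign determination for a degree-$d$ polynomial over a prescribed integer interval, including the degenerate cases (identically zero on a sub-interval, zeros at isolated integer points, $g_{ab}$ of degree strictly less than $d-1$ when leading terms cancel), can indeed be carried out in $O(d^3)$ operations while keeping intermediate numbers polynomially bounded as promised in Remark \ref{complexity_rm}. The $p_f$ bound itself is routine — it follows from the fundamental theorem of algebra applied to the difference polynomial — but it must be stated with the degenerate cases handled so that the recursion $N(m) \le 2 N(m/2) p$ from Theorem \ref{main_atree_th} genuinely applies with $p = O(d)$.
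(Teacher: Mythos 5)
Your overall route is the same as the paper's: feed a bound on $p_f$ and an $\SPO$ implementation into the block scheme \eqref{gen_minconv_complexity_eq}, with $\EVO$ costing $O(d)$ by Horner's rule. The one genuine flaw is your handling of the constant in $p_f = O(d)$. You write $\log_2(p) \le \log_2(d) + O(1)$ and claim the additive slack ``only changes $\sigma$ in lower-order terms.'' That is false: $\sigma$ sits in the exponent of $n$, so replacing $p=d$ by $p=Cd$ for any constant $C>1$ worsens $n^{1+\frac{\sigma-1}{\sigma}}$ by a polynomial factor. For example, $p=2$ gives exponent $4/3$, whereas $p=4$ gives exponent $11/7$. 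You therefore need the tight bound $p_f \le d$, which is what the paper proves and which does hold: $g_{ab}(x)=f(x+a)-f(x)+b$ has degree at most $d-1$ (the leading terms cancel), hence at most $d-1$ real roots unless it is identically zero, hence at most $d$ maximal intervals of constant strict sign; isolated integer zeros are absorbed into adjacent pieces via the $\ge 0$ / $\le 0$ labels permitted in the definition of a sign partition, and the identically-zero case is a single piece. With $p \le d$ in hand, the monotonicity of $\sigma(\cdot)$ gives $\sigma(p)\le\sigma(d)$ and the rest of your argument goes through.

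On the $\SPO$ implementation: the paper localizes roots to additive accuracy $1/3$ by combining the Budan--Fourier sign-variation count with dichotomy, at cost $O(d^3\log n)$ ($O(d\log n)$ queries, each requiring $O(d^2)$ arithmetic operations on the precomputed derivative sequence). Your Sturm-sequence alternative is equally legitimate, but its honest cost is likewise $O(d^3\log n)$ rather than $O(d^3)$, since isolating roots inside an integer interval of length up to $n$ to constant accuracy still requires $\Theta(\log n)$ bisection steps per root. This discrepancy is harmless for the final bound, because \eqref{gen_minconv_complexity_eq} already carries a $\log^2(n)$ factor from the $\EVO$ term, so $T_{\SPO}=O(d^3\log n)$ still yields $O\bigl(d^3\cdot n^{1+\frac{\sigma-1}{\sigma}}\cdot\log^2(n)\bigr)$.
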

\begin{proof}
Clearly, the complexity of $\EVO$ for $f$ is $O(d)$. 

Let us estimate the complexity of $\SPO$ for polynomials of the type $g(x) = f(x+a) + b - f(x)$. Since $\deg(g) \leq d-1$, the size $p$ of any minimal sign partition $g$ is bounded by $d$. To calculate this partition on a given interval $\IC$, we clearly need to localize all the roots of $g$ inside $\IC$. Since intervals in the resulting sign partition need to have integer end-points, we do not need to compute the roots exactly. Instead of that, we simply can calculate them with the additive accuracy $1/3$ and round-off to the nearest integer.

To localize the roots of $g$ on $\IC$, we will use the classical Budan--Fourier theorem. It states that for any interval $(\nu,\tau)$ with $g(\nu) \not= 0$ and $g(\tau) \not= 0$ the number of roots of $g(x)$ in the interval $(\nu, \tau)$ is equal or less than the value of $W(\nu) - W(\tau)$, where $W(x)$ is the number of sign changes in the sequence $f(x),\, f'(x),\, f''(x), \dots$. Note that the Budan--Fourier theorem does not calculate the exact number of roots, the real number of roots in $[\nu,\tau)$ can be less by an even number. But, in our case, we only need to know how the sign changes, when $x$ crosses an integer landmark point. So, this method can be used without restrictions.

Clearly, the sequence of $g(x)$-derivatives can be computed, using $O(d^2)$ arithmetic operations. After that, given a point $x$, the value of $W(x)$ could also be computed, using $O(d^2)$ arithmetic operations. Then, using the standard dichotomy principle, we could localize all the roots of $g$ on $\IC$ with the additive accuracy $1/3$, using $O(d \cdot \log (n))$ calls to $W(x)$. Hence, the complexity of $\SPO$ on $g$ is $O(d^3 \cdot \log (n) )$. 

Let us consider the formula \eqref{gen_minconv_complexity_eq}. Since $\sigma = \sigma(p)$ is a monotone function, we have $\sigma(p) \leq \sigma(d)$. Then, together with the complexity bounds for $\SPO$ and $\EVO$, the formula \eqref{gen_minconv_complexity_eq} gives the desired complexity bound for \ref{min_dot_def}. $\Box$
\end{proof}

\medskip

Now, we are going to consider the concave case. First of all, we need some auxiliary lemmas.




\begin{lemma}\label{convex_ext_lm}
Let $(x_0,f_0),\, (x_1,f_1),\, \dots, (x_{n-1},f_{n-1})$ be a sequence of pairs from $\RR^2$. By $d_i$ we denote $(f_i - f_{i-1})/(x_i - x_{i-1})$. Assume that 
\begin{gather*}
    x_0 \leq x_1 \leq \dots \leq x_{n-1} \\
    d_1 \leq \dots \leq d_{n-1}.
\end{gather*}

Then, there exists a $C^1$-smooth convex function $f \colon \RR \to \RR$, such that $f(x_i) = f_i$.
\end{lemma}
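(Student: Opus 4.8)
The plan is to reduce the statement to the construction of the derivative. It suffices to produce a continuous, non-decreasing function $\phi\colon\RR\to\RR$ with $\int_{x_{i-1}}^{x_i}\phi(t)\,dt = f_i-f_{i-1}$ for every $i\in\intint[1]{n-1}$: then $f(x):=f_0+\int_{x_0}^{x}\phi(t)\,dt$ has the non-decreasing derivative $\phi$, hence is convex; it is $C^1$ because $\phi$ is continuous; and telescoping the integral identities gives $f(x_i)=f_i$ for all $i$. Writing $h_i:=x_i-x_{i-1}>0$, the task becomes: build a continuous non-decreasing $\phi$ whose mean value over $[x_{i-1},x_i]$ equals the divided difference $d_i$ for each $i$ (outside $[x_0,x_{n-1}]$ we simply extend $\phi$ by constants, so $f$ is affine there).

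The heart of the argument is the non-degenerate case $d_1<d_2<\dots<d_{n-1}$. Here I would fix "node slopes'' that strictly interleave the divided differences, e.g. $s_i:=(d_i+d_{i+1})/2$ for $1\le i\le n-2$ together with $s_0:=d_1-1$ and $s_{n-1}:=d_{n-1}+1$, so that $s_0\le s_1\le\dots\le s_{n-1}$ and $s_{i-1}<d_i<s_i$ for every $i$. On each $[x_{i-1},x_i]$ I would then build $\phi$ continuous and non-decreasing, running from the value $s_{i-1}$ at $x_{i-1}$ up to $s_i$ at $x_i$, with mean exactly $d_i$ over the interval: let $\phi$ be equal to $s_{i-1}$ on a left sub-interval, then increase linearly to $s_i$ along a short ramp, then stay at $s_i$; the mean over $[x_{i-1},x_i]$ depends continuously on the position (and width) of the ramp and can be made to attain any prescribed value in the open interval $(s_{i-1},s_i)$ by the intermediate value theorem — this is precisely where the strict inequalities $s_{i-1}<d_i<s_i$ are used. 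Gluing the pieces is consistent because consecutive pieces share the common endpoint value $s_i$, so the resulting $\phi$ is globally continuous and non-decreasing; then $f=f_0+\int_{x_0}^{\,\cdot}\phi$ is the desired $C^1$ convex interpolant.

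The step I expect to be the main obstacle is the degenerate case in which several consecutive divided differences coincide, i.e. a block of collinear data points. On such a block monotonicity together with the mean-value constraints forces $\phi$ to be constant, which pins down the adjacent node slopes and can clash with the interleaving recipe above (in the extreme case where all $d_i$ are equal, $\phi$ is simply constant and $f$ is the affine interpolant, which is trivial; the trouble is the mixed configurations). To deal with this I would first merge each maximal affine run of the data into a single interval, run the interleaving construction on the reduced interval structure, and then verify that the constant values forced on the merged blocks match up across the run boundaries; this compatibility check is the delicate point and is exactly where the hypothesis $d_1\le\dots\le d_{n-1}$ (and, in the borderline configurations, the precise placement of the data points) must be invoked, so it is worth isolating as a separate lemma before assembling $\phi$ and hence $f$.
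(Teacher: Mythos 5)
Your route is genuinely different from the paper's: the paper does not build the function by hand but invokes a criterion from the smooth-convex-interpolation literature (\cite{SmoothConvexInterpol2017}), prescribing candidate derivative values $g_i$ at the nodes and verifying the inequalities $f_i \geq f_j + g_j(x_i-x_j) + \frac{1}{2L}\abs{g_i-g_j}$ for a sufficiently large Lipschitz constant $L$. You instead construct the derivative $\phi$ directly as a continuous non-decreasing step-and-ramp function with prescribed means $d_i$ over the intervals and integrate; this is self-contained and elementary, at the cost of a more hands-on gluing argument, whereas the paper's proof is shorter but leans on an external result stated for the $C^{1,L}$ class. Amusingly, your node slopes $s_i$ are exactly the paper's $g_i$ (strictly interleaved where $d_i<d_{i+1}$, equal to $d_i$ where $d_i=d_{i+1}$), so the two proofs encode the same combinatorial choice.

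The one piece you leave open — compatibility on maximal runs of equal divided differences — does go through, and you should close it rather than defer it, since it is short. On a run $d_i=d_{i+1}=\dots=d_j=:d$ spanning at least two intervals, monotonicity of $\phi$ plus the mean constraints force $\phi(x_k)=d$ at every interior node of the run, hence $\phi\equiv d$ on $[x_{i-1},x_j]$; this automatically satisfies every sub-interval constraint inside the run, so merging the run into one interval loses nothing. Since the run is maximal, the flanking divided differences satisfy $d_{i-1}<d<d_{j+1}$ strictly, so the flanking intervals still have a node slope strictly below (resp.\ above) their target mean on one side and the value $d$ on the other, and your intermediate-value ramp argument applies verbatim there. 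One small caveat shared with the paper: the hypothesis only asserts $x_0\leq\dots\leq x_{n-1}$, so $d_i$ is undefined when $x_i=x_{i-1}$; both proofs tacitly assume the abscissae are strictly increasing, which is how the lemma is actually used.
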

\begin{proof}
Put $\IC = \intint[0]{n-1}$. To prove the lemma, we will use the criteria, given in \cite[Corollary~1]{SmoothConvexInterpol2017}. For given $g_0, g_1, \dots, g_{n-1}$, it states that the $C^{1,L}$-smooth convex function $f$ with $f'(x_i) = g_i$ exists if and only if the following conditions are satisfied:
\begin{equation}\label{conv_smooth_criteria}
    f_i \geq f_j + g_j\cdot(x_i-x_j) + \frac{1}{2L} \cdot \abs{g_i - g_j},\quad \text{for } i,j \in \IC.
\end{equation}

We construct $g_i$ in the following way. We choose $g_0 < d_1$ and $g_{n-1} > d_{n-1}$. For $i \in \intint[1]{n-2}$, if $d_{i} < d_{i+1}$, we choose $g_i$ strictly between $d_{i}$ and $d_{i+1}$: $d_{i} < g_i < d_{i+1}$. In the opposite case, when $d_{i} = d_{i+1}$, we just set $g_i := d_i$.

Fix $i,j \in \intint[1]{n-2}$. Assume firstly that $d_{k} = d_{l}$, for all $k,l$ between (inclusively) $i,j$. Then, the following equality, for any $\varepsilon > 0$, holds, which follows from the definition of $d_i$:
$$
f_i = f_j + g_j \cdot (x_i - x_j) = f_j + g_j \cdot (x_i - x_j) + \varepsilon \cdot \abs{g_i - g_j}.
$$

Now, assume that $d_k \not= d_{k+1}$, for some $k$ between (inclusively) $i,j$.  Then, since $g_k > d_k$, we have
$$
f_i > f_j + g_j \cdot (x_i - x_j),
$$ and consequently 
$$
f_i \geq f_j + g_j \cdot (x_i - x_j) + \varepsilon \cdot \abs{g_i - g_j},
$$ for any sufficiently small $\varepsilon > 0$.
Finally, if $i = 0$ or $j = n-1$, the same inequality holds, because $g_0 < d_1$ and $g_{n-1} > d_{n-1}$.

Therefore, since $\IC$ is finite, we can choose $\varepsilon$ sufficiently small, such that the following inequality will hold, for any $i,j \in \IC$:
$$
f_i \geq f_j + g_j \cdot (x_i - x_j) + \varepsilon \cdot \abs{g_i - g_j}.
$$
The last fact satisfies the conditions \eqref{conv_smooth_criteria} with $L = 1/(2\varepsilon)$, which proves the lemma. $\Box$


\end{proof}

\begin{lemma}\label{sign_part_lm}
Let $f \colon [\alpha,\beta) \to \RR$ be a convex/concave function ($\alpha, \beta \in \RR$). The following statements hold:
\begin{enumerate}
    \item $p_f \leq 2$;
    \item Assume that $\EVO$ oracle is supported for $f$. Let $g(x) = f(x+a) + b - f(x)$, for some $a,b \in \RR$. Then, given a bounded integer interval $\IC$ of length $n$, some minimal sign partition from $\PS(g,\IC)$ can be computed, using $O(\log(n))$ calls to $\EVO$.
\end{enumerate}
\end{lemma}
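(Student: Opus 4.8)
The plan is to reduce both parts to one structural observation: if $f$ is convex (resp.\ concave) and $a\in\RR_{\ge 0}$, then $h(x):=f(x+a)-f(x)$ is a nondecreasing (resp.\ nonincreasing) function of $x$. I would prove this first; in the convex case, fix $x_1<x_2$ and write the ``inner'' points $x_2$ and $x_1+a$ as convex combinations of the ``outer'' points $x_1$ and $x_2+a$, with weight $\lambda=\tfrac{a}{x_2-x_1+a}$ on $x_1$ in the first combination and weight $\mu=\tfrac{x_2-x_1}{x_2-x_1+a}$ on $x_1$ in the second; since $\lambda+\mu=1$, applying convexity to each combination and summing yields $f(x_2)+f(x_1+a)\le f(x_1)+f(x_2+a)$, i.e.\ $h(x_1)\le h(x_2)$. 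The concave case follows by applying this to $-f$.

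\textbf{Part 1.} For fixed $a\in\RR_{\ge 0}$, $b\in\RR$ and an integer subinterval $\JC$, write $g(x)=g_{ab}(x)=h(x)+b$, which is monotone on $\JC$ by the observation above, so $\sgn g$ is monotone along the integer points of $\JC$. In the convex case $\sgn g$ can only pass from $-$ through $0$ to $+$ as the integer argument grows; letting $k$ be the least integer in $\JC$ with $g(k)\ge 0$ (and $k$ the right endpoint of $\JC$ if there is none), the integer interval to the left of $k$ carries the sign ``$<0$'' and the one from $k$ onward carries ``$\ge 0$'', giving a sign partition of size at most $2$, hence $p_f(a,b,\JC)\le 2$. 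The concave case is symmetric: $g$ is nonincreasing, so split at the least integer where $g<0$ and use the signs ``$\ge 0$'' then ``$<0$''. Taking the maximum over all $a,b,\JC$ gives $p_f\le 2$.

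\textbf{Part 2.} The description in Part 1 is already an algorithm: a minimal sign partition of $g$ on an integer interval $\IC$ of length $n$ is determined once the single transition integer $k$ is found. Since $\sgn g$ is monotone along the $n$ integers of $\IC$, and since $\sgn g(x)$ can be read off from the two values $f(x+a)$, $f(x)$ obtained from $\EVO$ together with $b$ by one arithmetic comparison, I would binary-search for $k$: test the two endpoints first to detect the constant-sign case (returning the trivial single-interval partition), then repeatedly evaluate $g$ at the current midpoint and halve the candidate range according to the sign. This terminates after $O(\log n)$ steps, each using $O(1)$ calls to $\EVO$, and outputs the two intervals $[\,\cdot\,,k)$, $[k,\,\cdot\,)$ with the signs fixed in Part 1.

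\textbf{Main obstacle.} The only delicate point is the interface between the real-valued sign conditions and the requirement that the partition intervals have integer endpoints: one must verify that monotonicity of $g$ genuinely forces its sign pattern to be separated by a single integer cut, so that no third interval is ever required, and likewise that $\EVO$ — which evaluates and compares $f$, not $g$ — suffices to decide $\sgn g(x)$ at an integer $x$. Both are routine once the monotonicity lemma is available, but they are exactly where the bound ``$\le 2$'' and the ``$O(\log n)$'' come from, so I would state them explicitly rather than gloss over them.
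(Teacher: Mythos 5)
Your proof is correct, but it takes a genuinely different route from the paper's. The paper first reduces to the convex case, then invokes an auxiliary interpolation result (Lemma~\ref{convex_ext_lm}, itself relying on a criterion from the smooth convex interpolation literature) to replace $f$ by a $C^1$-smooth convex function agreeing with it on integers, and then uses the fundamental theorem of calculus together with monotonicity of $f'$ to show that the zero set of $g$ is a single connected interval; the bound $p_f\le 2$ follows by merging the middle ``$=0$'' block into one of the two strict-sign blocks. You instead prove directly, via the two-convex-combination (equivalently, increasing-slopes) argument, that $x\mapsto f(x+a)-f(x)$ is nondecreasing for convex $f$ and $a\ge 0$. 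This is a strictly stronger structural fact than connectedness of the zero set: it yields the sign pattern $-,0,+$ immediately, dispenses entirely with the smoothing machinery and the external reference, and is also exactly the monotonicity that justifies the binary search in Part~2 (the paper's ``standard dichotomy principle'' implicitly needs the same fact). So your argument is more elementary and, if anything, more self-contained than the paper's.

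One small point to make explicit when you write it up: $\PS(g,\IC)$ consists of \emph{minimal} sign partitions, so your algorithm must return a single interval not only when $g$ has strictly constant sign but also in the weakly one-signed cases (e.g.\ $g\le 0$ throughout, with $g$ vanishing at the right end). Your endpoint tests handle this provided you test the weak inequalities $g(\ell)\ge 0$ and $g(r-1)\le 0$ (for $\IC=[\ell,r)$) rather than merely checking whether the two endpoint signs agree; with that reading, the output is minimal in every case. Also note that the lemma allows arbitrary $a\in\RR$; for $a<0$ your monotonicity claim simply reverses direction (write $f(x+a)-f(x)=-\bigl(f(y+\abs{a})-f(y)\bigr)$ with $y=x+a$), so both parts go through unchanged.
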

\begin{proof}
Assume that $f$ is convex. In the opposite case, we can consider a function $$-g(x) = \bigl(-f(x+a)\bigr) - b -\bigl(-f(x)\bigr), b \in \RR.$$ Clearly, all the sign partitions of $g(x)$ and $-g(x)$ are equivalent. 

Next, we can assume that $f$ is $C^1$-smooth. Definitely, if $f$ is not $C^1$-smooth, due to Lemma \ref{convex_ext_lm}, there exists a convex $C^1$-smooth function $h$, such that $h(x) = f(x)$, for all $x \in [\alpha,\beta) \cap \ZZ$. Since any minimal sign partition of $f$ consists of intervals with integer end-points, the sign partitions for $h$ and $f$ are equivalent, and we can use $h$ instead of $f$. 

Additionally, we assume that $a > 0$, because, in the opposite case, both statements are trivial. Now, we claim that the equality $g(x) = 0$ is possible only for points in some connected interval $[\nu, \tau] \subseteq [\alpha,\beta)$. Assume that $g(\nu) = 0$ and $g(\tau) = 0$, for some $\nu,\tau \in [\alpha,\beta)$ with $\tau > \nu$.
By definition, we have
\begin{gather*}
f(\nu+a) - f(\nu) = -b,\\
f(\tau+a) - f(\tau) = -b.
\end{gather*}
Due to $C^1$-smoothness of $f$, we can use the fundamental theorem of calculus:
\begin{gather}
    \int\limits_{\nu}^{\nu+a} f'(x) dx = -b,\label{nu_int_eq}\\
     \int\limits_{\tau}^{\tau+a} f'(x) dx = -b.\label{tau_int_eq}
\end{gather}

Put $\delta = \tau - \nu$. Make a change of variables $x \to x - \delta$ in \eqref{nu_int_eq} and $x \to x + \delta$ in \eqref{tau_int_eq}:
\begin{gather}
\int\limits_{\tau}^{\tau+a} f'(x - \delta) dx = -b,\label{nu_int_mapped_eq}\\
\int\limits_{\nu}^{\nu+a} f'(x + \delta) dx = -b.\label{tau_int_mapped_eq}    
\end{gather}

Subtracting \eqref{nu_int_eq} from \eqref{tau_int_mapped_eq} and \eqref{nu_int_mapped_eq} from \eqref{tau_int_eq}, we get
\begin{gather*}
    \int\limits_{\nu}^{\nu+a} \bigl(f'(x+\delta) - f'(x)\bigr) dx = 0,\\
    \int\limits_{\tau}^{\tau+a} \bigl( f'(x) - f'(x-\delta) \bigr) dx = 0.
\end{gather*}
Since $f$ is convex, $f'(x+\delta) - f'(x) \geq 0$ and $f'(x) - f'(x-\delta) \geq 0$, and, consequently,
\begin{gather*}
    \forall x \in [\nu,\nu+a] \quad f'(x) = \const,\\
    \forall x \in [\tau,\tau+a] \quad f'(x) = \const.
\end{gather*}

Again, since $f'(x)$ is convex, $f'(x) = \const$, for the whole interval $[\nu, \tau]$. Consequently, $g'(x) = 0$ and $g(x) = \const$, for $x \in [\nu, \tau]$. Since $g(\nu) = 0$, it holds that $g(x) = 0$, for $x \in [\nu, \tau]$. So, the claim is proved.

Therefore, any given interval $\IC \subseteq [\alpha,\beta)$, where $g$ is well-defined, can be partitioned into at most three parts: strict inequalities $g(x) > 0$ or $g(x) < 0$ on the left and right sides, and equality $g(x) = 0$ in the middle. Consequently, any minimal sign partition of $g$ on $\IC$ consists of at most $2$ pieces, which proves the inequality $p_f \leq 2$.

To calculate such a partition, we need to find an integer point $z \in \IC$ with $g(z) = 0$. Or, in the case, when $g(x) \not= 0$ for all $x \in \IC \cap \ZZ$, we need to calculate a point $z$, such that $g(z) < 0$ and $g(z+1) > 0$, or vice-versa. Clearly, in both cases we can use the standard dichotomy principle that takes $O( \log(n) )$ calls to $\EVO$. $\Box$


\end{proof}

\begin{theorem}\label{minconv_convex_th}
Let $f\colon [0,m) \to \RR$ be a concave function. Assume that $\EVO$ oracle is available.

Then, \ref{min_dot_def} can be solved by an algorithm with the arithmetic and oracle complexity, bounded by
$$
O\bigl( n^{4/3} \cdot \log^2(n) \bigr).
$$
\end{theorem}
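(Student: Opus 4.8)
The plan is to specialize the block-decomposition scheme of Subsection~\ref{poly_subs} to the concave setting, where Lemma~\ref{sign_part_lm} identifies the relevant structural parameter. First I would apply part~1 of Lemma~\ref{sign_part_lm}: since $f$ is concave, $p := p_f \le 2$. Substituting $p = 2$ into $\sigma = \log_2(p) + \frac{1}{1 + \log_2(p)}$ gives $\sigma = 1 + \frac{1}{2} = \frac{3}{2}$, so $\frac{\sigma - 1}{\sigma} = \frac{1}{3}$ and the exponent appearing in \eqref{gen_minconv_complexity_eq} is $1 + \frac{1}{3} = \frac{4}{3}$. As in Subsection~\ref{poly_subs}, I would take the block size $B = n^{1/\sigma} = n^{2/3}$ (assuming, exactly as there, that $n$ and $B$ are powers of $2$; this is achieved by padding $a$ with $+\infty$ entries and extending $f$ affinely, hence concavely, past $x = m$, neither of which changes any value $c_k$ for $k \le n - m$).

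Next I would bound the cost $T_{\SPO}$ of a single call to the minimal sign partition oracle. By Theorem~\ref{main_atree_th}, such calls are issued only for functions $g(x) = f(x+a) - f(x+b) + c$ with $a, b \in \ZZ_{\ge 0}$ and $c \in \ZZ$, which is slightly more general than the single-shift difference $g(x) = f(x+a) + b - f(x)$ handled in part~2 of Lemma~\ref{sign_part_lm}. I would bridge this gap by a case analysis on $\sgn(a - b)$. If $a \ge b$, put $\tilde f(y) = f(y + b)$; a shift of a concave function is concave, so $\tilde f$ is concave, and $g(x) = \tilde f\bigl(x + (a - b)\bigr) + c - \tilde f(x)$ has exactly the form covered by the lemma. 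If $a < b$, apply the same observation to $-g(x) = \tilde f\bigl(x + (b - a)\bigr) - c - \tilde f(x)$ with $\tilde f(y) = f(y + a)$, using that $g$ and $-g$ have the same minimal sign partitions (negation only permutes the predicates $\ge 0, \le 0, > 0, < 0$ among themselves). In either case part~2 of Lemma~\ref{sign_part_lm} yields a minimal sign partition with $O(\log n)$ calls to $\EVO$ and $O(\log n)$ additional arithmetic operations, so $T_{\SPO} = O(\log n)$, while trivially $T_{\EVO} = O(1)$.

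Finally I would combine the two ingredients. Plugging $\sigma = \frac{3}{2}$, $T_{\SPO} = O(\log n)$ and $T_{\EVO} = O(1)$ into \eqref{gen_minconv_complexity_eq} gives a total arithmetic and oracle complexity of $O\bigl((\log n + \log^2 n) \cdot n^{4/3}\bigr) = O\bigl(n^{4/3} \log^2 n\bigr)$, and correctness is inherited from Subsection~\ref{poly_subs} via the identity $c_k = \FC\bigl([k, k+m);\, 0\bigr)$. (The space bound \eqref{gen_minconv_space_eq} likewise specializes to $O(n^{4/3})$, although only the running time is claimed.)

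I expect the only genuinely delicate point to be the reduction in the second paragraph: Lemma~\ref{sign_part_lm}(2) is stated for a single-shift difference, whereas the augmented tree feeds in two-shift differences $f(x+a) - f(x+b) + c$, so one must verify that these still lie within the scope of that lemma. The observations that shifting preserves concavity and that negation preserves minimal sign partitions close this gap; everything else is a routine substitution into the machinery of Subsection~\ref{poly_subs}, Theorem~\ref{main_atree_th}, and Lemma~\ref{sign_part_lm}.
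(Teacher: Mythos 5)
Your proposal is correct and follows essentially the same route as the paper: invoke Lemma~\ref{sign_part_lm} to get $p_f \le 2$ and $T_{\SPO} = O(\log n)$, then substitute $\sigma = 3/2$ into the general bound \eqref{gen_minconv_complexity_eq}. Your extra step bridging the two-shift differences $f(x+a)-f(x+b)+c$ fed to $\SPO$ by Theorem~\ref{main_atree_th} back to the single-shift form of Lemma~\ref{sign_part_lm}(2) (via the concavity-preserving shift $\tilde f(y)=f(y+b)$ and negation) is a valid patch for a point the paper's proof leaves implicit.
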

\begin{proof}
Let $g(x) = f(x+a) + b - f(x)$ and $\IC \subseteq [0,n)$. Due to Lemma \ref{sign_part_lm}, the complexity of $\SPO$-oracle with the input pair $(g,\IC)$ is bounded by $O(\log(n))$ calls to $\EVO$. Additionally, $p := p_f \leq 2$.

Let us consider the formula \eqref{gen_minconv_complexity_eq}. Since $\sigma = \sigma(p)$ is a monotone function, we have $\sigma(p) \leq \sigma(2) = 3/2$. Together with the complexity bound for $\SPO$, the formula \eqref{gen_minconv_complexity_eq} gives the desired complexity bound for \ref{min_dot_def}. The proof is finished. $\Box$
\end{proof}

\section{Applications for the bounded knapsack}\label{knap_sec}

Let us consider \ref{knap_gen_def}. W.l.o.g. we can assume that $u_k \leq \lfloor W/w_k \rfloor$, for $k \in \intint n$. Additionally, we consider the minimization problem instead of maximization, since we can work with $-f$ instead $f$, which preserves the separability property. 

Let us consider a very basic dynamic program, dated to Bellman \cite{Bellman}. For $k \in \intint[1]{n}$ and $w_0 \in \intint[0]{W}$, by $\DPR(k,w_0)$ we denote the following problem:
\begin{gather}
    \sum\limits_{i = 1}^{k} f_i(x_i) \to \min\notag\\
    \begin{cases}
    \sum\limits_{i = 1}^k w_i x_i = w_0\\
    0 \leq x_i \leq u_i\\
    x \in \ZZ^k.
    \end{cases}\label{knap_DP_prob}
\end{gather}

Clearly, the problem $\DPR(n,W)$ is equivalent to the original problem \ref{knap_gen_def}. For $2 \leq k \leq n$, the value $\DPR(k,w_0)$ can be computed, using the values $\DPR(k-1,\cdot)$ by the following formula:
\begin{equation}\label{knap_DP_rec}
\DPR(k,w_0) = \min\limits_{j \in \intint[0]{u_k}}\Bigl\{ \DPR(k-1, w_0 - w_k \cdot j) + f_k(j) \Bigr\},
\end{equation}
assuming that $\DPR(k,w_0) = +\infty$ for $w_0 < 0$.

For $k = 1$, we have 
\begin{equation}\label{knap_DP_1}
\DPR(1,w_0) = \begin{cases}
f_1(w_0/w_1) \quad \text{if }w_1 \mid w_0\text{ and }0\leq w_0/w_1 \leq u_1\\
+\infty \quad \text{, in the opposite case}.
\end{cases}
\end{equation}

Let us estimate the complexity to compute all the values $\DPR(k,w_0)$, for $k \in \intint{n}$ and $w_0 \in \intint[0]{W}$.

Clearly, the values $\DPR(1,w_0)$ can be computed with $O(W)$ operations. Definitely, each of the values $\DPR(1,0),\,\DPR(1,w_1),\,\DPR(1,2 w_1),\,\dots$ can be computed with $O(1)$ operations, using the formula \eqref{knap_DP_1}. For other values of $w_0$, we just set $\DPR(1,w_0) = +\infty$.

Now, let us show how the computation of $\DPR(k,w_0)$ can be reduced to $(\min,+)$-convolution. Fix $k \geq 2$ and some residue $r$ modulo $w_k$. We define the sequences $\{a_i\}_{i \in \intint[0]{u_k}}$, $\{b_i\}_{i \in \intint[0]{u_k}}$, and $\{c_i\}_{i \in \intint[0]{u_k}}$ as follows:
\begin{gather*}
    a_i = \DPR(k-1,r + i \cdot w_k),\\
    b_i = f_k(i),\\
    c_i = \DPR(k,r + i \cdot w_k).
\end{gather*}

Assuming that $a_i = b_i = c_i = 0$, for $i < 0$, and due to \eqref{knap_DP_rec}, we have
\begin{equation}\label{knap_a_rec}
c_i = \min\limits_{j \in \intint[0]{i}}\bigl\{ a_{i-j} + b_j \bigr\}.
\end{equation}
That gives $c = (a \star b)[0,u_k]$. Therefore, considering all the values of $r$, the complexity to compute the level $\DPR(k,\cdot)$, in the assumption that the level $\DPR(k-1, \cdot)$ has already been computed, can be expressed by 
$$
O\bigl(w_k \cdot T_{conv}(u_k)\bigr),
$$ where $T_{conv}(\cdot)$ denotes the complexity of the $(\min,+)$-convolution. The total complexity of the whole dynamic programming scheme is
\begin{equation}\label{knap_total_DP_compl}
    O\Bigl( W + \sum\limits_{k=2}^n w_k \cdot T_{conv}(u_k) \Bigr).
\end{equation}

Using the previous formula, the inequality $u_i \leq \lfloor W/w_k \rfloor$, and different $T_{conv}$-complexity results, due to Subsection \ref{plinear_subs} and Theorems \ref{minconv_poly_th}, \ref{minconv_convex_th} of Subsection \ref{poly_subs}, we obtain the following result:

\begin{theorem}\label{main_knap_comp_th}
The following statements hold:
\begin{enumerate}
    \item The problem \ref{knap_plin_def} can be solved by $O\bigl( p \cdot n \cdot W \cdot \log(W)\bigr)$ arithmetic complexity algorithm;
    
    \item The problem \ref{knap_conv_def} can be solved by an algorithm with the arithmetic complexity bound
    \begin{multline*}
    O\Bigl( W^{4/3} \cdot \bigl( w_2^{-1/3} + w_2^{-1/3} + \dots + w_n^{-1/3} \bigr) \cdot \log^2(W) \Bigr) = \\
    = O\bigl( n \cdot W^{4/3} \cdot \log^2(W) \bigr);
    \end{multline*}
    
    \item The problem \ref{knap_poly_def} can be solved by an algorithm with the arithmetic complexity bound
    \begin{multline*}
    O\Bigl( d^3 \cdot W^{1+\frac{\sigma-1}{\sigma}} \cdot \bigl( w_2^{\frac{1-\sigma}{\sigma}} + w_2^{\frac{1-\sigma}{\sigma}} + \dots + w_n^{\frac{1-\sigma}{\sigma}} \bigr) \cdot \log^2(W) \Bigr) = \\
    = O\bigl( d^3 \cdot n \cdot W^{1+\frac{\sigma-1}{\sigma}} \cdot \log^2(W) \bigr),
    \end{multline*}
    where $\sigma = \log_2(d) + \frac{1}{1 + \log_2(d)} \geq 1$.
\end{enumerate}

All computations are performed with integer numbers of the size $O(\log(W))$.
\end{theorem}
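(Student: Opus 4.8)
The plan is to feed the structured $(\min,+)$-convolution algorithms of Section~\ref{conv_sec} into the Bellman-style dynamic program whose cost is recorded in \eqref{knap_total_DP_compl}. Recall that, after splitting into residue classes modulo $w_k$, passing from the level $\DPR(k-1,\cdot)$ to $\DPR(k,\cdot)$ reduces to $w_k$ instances of $(\min,+)$-convolution (in the form \ref{min_conv_def}, equivalently \ref{min_dot_def}), each of length $u_k+1$, in which the second sequence is $b_i = f_k(i)$. Hence, if $T_{conv}(\cdot)$ denotes the running time of the corresponding structured convolution, the total work is bounded by \eqref{knap_total_DP_compl}, and the three statements follow by substituting the appropriate $T_{conv}$ and using the a~priori bound $u_k \le \lfloor W/w_k \rfloor$ from the beginning of Section~\ref{knap_sec}.

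For \ref{knap_plin_def} the sequence $b_i = f_k(i)$ is piece-wise linear with $p$ pieces, so by Subsection~\ref{plinear_subs} we may take $T_{conv}(u_k)=O\bigl(p\cdot u_k\cdot\log(u_k)\bigr)$; then $w_k\cdot T_{conv}(u_k)=O\bigl(p\cdot w_k\cdot(W/w_k)\cdot\log W\bigr)=O(p\cdot W\cdot\log W)$, and summing the $n-1$ terms of \eqref{knap_total_DP_compl} (the leading $O(W)$ being absorbed) gives $O(p\cdot n\cdot W\cdot\log W)$. For \ref{knap_conv_def} the sequence $b_i=f_k(i)$ is concave, since the first differences of $f_k$ are non-increasing, so Theorem~\ref{minconv_convex_th} gives $T_{conv}(u_k)=O\bigl(u_k^{4/3}\cdot\log^2(u_k)\bigr)$, whence $w_k\cdot T_{conv}(u_k)=O\bigl(w_k\cdot(W/w_k)^{4/3}\cdot\log^2 W\bigr)=O\bigl(w_k^{-1/3}\cdot W^{4/3}\cdot\log^2 W\bigr)$; summing over $k$ yields the first displayed bound, and $w_k\ge 1$ gives $\sum_{k=2}^{n}w_k^{-1/3}\le n$, which gives the second.

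For \ref{knap_poly_def} we have $b_i=f_k(i)$ with $f_k\in\ZZ^d[x]$, so Theorem~\ref{minconv_poly_th} gives $T_{conv}(u_k)=O\bigl(d^3\cdot u_k^{1+\frac{\sigma-1}{\sigma}}\cdot\log^2(u_k)\bigr)$ with $\sigma=\log_2 d+\frac{1}{1+\log_2 d}$. Setting $\rho=1+\frac{\sigma-1}{\sigma}=\frac{2\sigma-1}{\sigma}$ and using $u_k\le W/w_k$ gives $w_k\cdot T_{conv}(u_k)=O\bigl(d^3\cdot w_k^{1-\rho}\cdot W^{\rho}\cdot\log^2 W\bigr)$ with $1-\rho=\frac{1-\sigma}{\sigma}$; summing over $k$ then yields the first displayed bound. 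Since $d\ge 1$ forces $\sigma\ge 1$, the exponent $\frac{1-\sigma}{\sigma}$ is non-positive, so $w_k^{\frac{1-\sigma}{\sigma}}\le 1$ and $\sum_{k=2}^{n}w_k^{\frac{1-\sigma}{\sigma}}\le n$, giving the second.

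Finally, the bit-size claim holds because the weights $w_i$, the bounds $u_i$, the capacity $W$ and all indices handled are non-negative integers not exceeding $W$, while by the standing convention of Remark~\ref{complexity_rm} the objective values stored in the tables $\DPR(k,\cdot)$ stay polynomially bounded, so every manipulated number has size $O(\log W)$. I do not expect a genuine obstacle here, since the statement is an assembly of already-established pieces; the only point deserving care is the exponent bookkeeping in the polynomial case, together with the observation that the inequality $\sigma\ge 1$ is precisely what makes the per-item exponents of $w_k$ non-positive, so that the sums over $k$ collapse to a single factor of $n$.
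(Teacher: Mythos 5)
Your proposal is correct and follows exactly the paper's route: it instantiates the Bellman recursion's convolution-based cost formula \eqref{knap_total_DP_compl} with the bound $u_k \le \lfloor W/w_k\rfloor$ and the three structured $T_{conv}$ estimates from Subsection~\ref{plinear_subs} and Theorems~\ref{minconv_poly_th}, \ref{minconv_convex_th}, and your exponent bookkeeping ($1-\rho=\frac{1-\sigma}{\sigma}$, $\sigma\ge 1$) matches the stated bounds. The only caveat is one the paper itself glosses over: after the max-to-min negation the sequence fed to the convolution is $-f_k$, so the convex/concave labels should be tracked through that sign change, but your treatment is no less careful than the paper's on this point.
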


\section{\ref{SVP_def} and \ref{CVP_def} dynamic programming algorithms}\label{SVP_CVP_sec}

\subsection{\ref{SVP_def} problem}\label{SVP_subs}

Let us consider the generalized problem \ref{gen_SVP_def}.



It is a known fact (see, for example, \cite{Schrijver,SNFOptAlg,Zhendong}) that there exist unimodular matrices $P \in \ZZ^{n \times n}$ and $Q \in \ZZ^{n \times n}$, such that $P A Q = S$, where $S \in \ZZ_{\geq 0}^{n \times n}$ is a diagonal non-degenerate matrix. Moreover, $\prod_{i = 1}^{k} S_{ii} = \Delta_{\gcd}(A,k)$, for any $k \in \intint{n}$, and, consequently, $S_{ii} \mid S_{(i+1) (i+1)}$, for $i \in \intint{n-1}$. Here $\Delta_{\gcd}(A,k)$ denotes the greatest common divisor of $k \times k$ sub-determinants of $A$. The matrix $S$ is called the \emph {Smith Normal Form} (or, shortly, the SNF) of $A$.

Using the SNF, we can reformulate the problem \eqref{gen_SVP_def}:
\begin{gather}
    \sum\limits_{i=1}^n f(x_i) \to \min\notag\\
    \begin{cases}
    P x \equiv \BZero \pmod{S \cdot \ZZ^n}\\
    x \in \ZZ^n \setminus \{\BZero\}.
    \end{cases}\label{SNV_SVP_def}
\end{gather}

Let us consider the quotient group $\GC = \ZZ^n/S\cdot\ZZ^n$ (with respect to addition in $\ZZ^n$), and define $g_i = P_i \bmod \diag(S)$, where $P_i$ is the $i$-th column of $P$ and $i \in \intint n$. We identify the vectors $g_i$ with the elements of the group $\GC$. Clearly, $\abs{\GC} = \Delta$. Then, the problem \eqref{SNV_SVP_def} can be reformulated as a minimization problem on $\GC$:
\begin{gather}
    \sum\limits_{i=1}^n f(x_i) \to \min\notag\\
    \begin{cases}
    \sum\limits_{i=1}^n x_i \cdot g_i = 0\\
    x \in \ZZ^n \setminus \{\BZero\}.
    \end{cases}\label{group_SVP_def}
\end{gather}

\begin{remark}\label{group_compl_rm}
Note that since $\abs{\det(S)} = \Delta$, the diagonal of $S$ contains at most $\log_2(\Delta)$ of elements that are not equal $1$. Therefore, the arithmetic complexity of one operation with elements of $\GC$ is $O(\log(\Delta))$.

\end{remark}

\begin{remark}\label{duplicates_SVP_rm}
W.l.o.g. we can assume that $g_i \not= \pm g_j$, for different $i,j$. Consequently, $n \leq \Delta/2 + 1$. Definitely, if for example $g_1 = \pm g_2$, then the vector $(1,\mp 1,0,\dots,0)^\top \in \ZZ^n\setminus\{\BZero\}$ is a feasible solution of \eqref{group_SVP_def}. Clearly, the only solutions, which can be better, are solutions of the type $\pm e_i$, which are be feasible only if $g_i = 0$.

The duplicates and zeros inside $g_1, g_2, \dots, g_n$ can be detected by an algorithm, like the radix-sort using $O(n \cdot \Delta)$ group operations or by any comparison-based sorting using $O(n \cdot \log(n))$ group operations.
\end{remark}

To solve the problem \eqref{group_SVP_def}, we will use the following dynamic programming scheme. For $g_0 \in \GC$ and $k \in \intint{n}$, we define the problem $\DPR(k,g_0)$ in the following way:
\begin{gather}
    \sum\limits_{i = 1}^k f_i(x_i) \to \min\notag\\
    \begin{cases}
    \sum\limits_{i=1}^k x_i \cdot g_i = g_0\\
    x \in \ZZ^k \setminus \{\BZero\}.
    \end{cases}\label{SVP_group_DP_def}
\end{gather}
Clearly, the problem $\DPR(n,0)$ is equivalent to the problem \ref{group_SVP_def}.

Denote 
\begin{gather*}
    \psi_+(k,g_0) = \min\limits_{j \in \intint[0]{\Delta}} \bigl\{ \DPR(k-1, g_0 - j \cdot g_k) + f_k(j) \bigr\},\\
    \psi_-(k,g_0) = \min\limits_{j \in \intint[0]{\Delta}} \bigl\{ \DPR(k-1, g_0 + j \cdot g_k) + f_k(j) \bigr\},\\
    \eta(k,g_0) = \min\bigl\{ f_k(j)\colon j \cdot g_k = g_0,\; j \in \intint[-\Delta]{\Delta}\setminus\{0\} \bigr\}.
\end{gather*}
If the set, where we take $\min$ for $\eta(k,g_0)$, is empty, then we set $\eta(k,g_0) = +\infty$.

Since $f_k$ is monotone and even, we have $\DPR(1,g_0) = \eta(1,g_0)$. Similarly, for $k \geq 2$, it can be straightforwardly checked out that
\begin{equation}
    \DPR(k,g_0) = \min\bigl\{\psi_+(k-1,g_0),\;\psi_-(k-1,g_0),\;\eta(k-1,g_0)\bigr\}.
\end{equation}
Note that we can not only use the values of $\psi_+(k-1,g_0)$ and $\psi_-(k-1,g_0)$ in the previous formula, because in this case we are missing out the solutions of the type $(0,0,\dots,0,j)^\top \in \ZZ^k$. So, we need additionally to take into account the values of $\eta(k-1,g_0)$.

First of all, fix $k$. Let us show how to compute the values $\eta(k,g_0)$, for all $g_0 \in \GC$, using $O(\Delta)$ group operations. Note that $\eta(k,g_0) \not= +\infty$ if and only if $g_0 = g_k \cdot j$, for some $j \in \ZZ\setminus\{0\}$. Hence, we need to fill only the values $\eta(k, j \cdot g_k)$, for other values of $g_0$ we can just set $\eta(k, g_0) = +\infty$. 

To fill $\eta(k, j \cdot g_k)$, we can do the following:
\begin{enumerate}
    \item Assign $\eta(k,g_0) := +\infty$, for all $g_0 \in \GC$;
    \item For $j \in \intint{\Delta-1}$, do the following:
    \item \qquad If $\eta(k,j \cdot g_k) = +\infty$, then assign $\eta(k,j \cdot g_k) := f_k(j)$;
    \item \qquad If $\eta(k,-j \cdot g_k) = +\infty$, then assign $\eta(k,-j \cdot g_k) := f_k(j)$;
\end{enumerate}
To see that the algorithm is correct, assume that $j^* \in \ZZ\setminus\{0\}$ is the value such that $g_0 = j^* \cdot g_k$ and $\abs{j^*}$ is minimal. Then, clearly $\eta(k, g_0) = f_k(j^*)$. If $j^* > 0$, we will find it during the $3$-th step. If $j^* < 0$, the $4$-th step will give the correct value. This value will not be rewritten, because only the values with $\eta(k,g_0) = +\infty$ could be assigned to something.

Therefore, the values $\eta(k,g_0)$, for $k \in \intint n$ and $g_0 \in \GC$, can be computed using $O(n \cdot \Delta)$ group operations.

Fix $k \geq 2$. Let us estimate the complexity to compute $\psi_+(k,\cdot)$ in assumption that the layer $\DPR(k-1,\cdot)$ is already computed.

Let us consider the quotient group $\QS = \GC/\langle g_k \rangle$ and fix $\QC \in \QS$. Let $d_k = \abs{\langle g_k \rangle}$. Clearly, $\QC = q + \langle g_k \rangle$, where $q \in \GC$ is a representative of $\QC$, and $d_k = \abs{\QC}$. Let us define the sequences $\{a_i\}_{i \in \intint[0]{d_k-1}}$, $\{b_i\}_{i \in \intint[0]{d_k-1}}$, and $\{c_i\}_{i \in \intint[0]{d_k-1}}$ as follows:
\begin{gather*}
    a_i = \DPR(k-1, q + i \cdot g_k),\\
    b_i = f_k(i),\\
    c_i = \psi_+(k, q + i \cdot g_k).
\end{gather*}

Assuming that $a_i = b_i = c_i = 0$ for $i < 0$, and due to the definition of $\psi_+$, we have
\begin{equation*}
    c_i = \min\limits_{j \in \intint[0]{i}} \bigl\{ a_{i-j} + b_i \bigr\}.
\end{equation*}
That gives $c = (a \star b)[0,d_k-1]$. Therefore, considering all the cosets $\QC \in \QS$, the group operations complexity to compute $\psi_+(k,\cdot)$, in the assumption that the level $\DPR(k-1,\cdot)$ has already been computed, can be expressed by
$$
O\bigl(d_k \cdot T_{conv}(\Delta/d_k)\bigr),
$$ where $T_{conv}(\cdot)$ denotes the complexity of the $(\min,+)$-convolution. The values of $\psi_-(k,\cdot)$ can be computed in a similar way with the same complexity bound.

Consequently, the layer $\DPR(k,\cdot)$, again in the assumption that the level $\DPR(k-1,\cdot)$ has already known, can be computed, using
$
O\bigl(d_k \cdot T_{conv}(\Delta/d_k)\bigr)
$ group operations.

The total group operations complexity is 
$$
O\bigl( n \cdot \Delta + \sum\limits_{k=1}^n d_k \cdot T_{conv}(\Delta/d_k) \bigr).
$$

Since $f_k$ is convex, due to \cite{GraphKnap}, $T_{conv}(k) = O(k)$. Consequently, the last bound becomes
$$
O(n \cdot \Delta).
$$

Due to Remark \ref{group_compl_rm}, the arithmetic complexity of group operations is $O(\log(\Delta))$. Hence, the total arithmetic complexity to solve the problem \eqref{group_SVP_def} can be expressed by
$$
O\bigl(n \cdot \Delta \cdot \log(\Delta) \bigr).
$$

Finally, assuming that the original group problem \eqref{group_SVP_def} contains duplicates, we can remove them, using Remark \ref{duplicates_SVP_rm} with $O\bigl(n \cdot \log(n) \cdot \log(\Delta)\bigr)$ arithmetic operations. Denoting the dimension of the resulting problem by $m \leq \Delta/2 + 1$ and taking into account the SNF computational complexity, denoted by $T_{SNF}(n,\Delta)$, we get the complexity bound of the whole algorithm
\begin{equation*}
    O\bigl( T_{SNF}(n,\Delta) + n \cdot \log(n) \cdot \log(\Delta) + m \cdot \Delta \cdot \log(\Delta) \bigr).
\end{equation*}

Due to \cite{SNFOptAlg}, $T_{SNF}(n,\Delta) = O\bigl(n^{\omega} \cdot \log(\Delta)\bigr)$ of arithmetic operations with integers of the size $O(\log(\Delta))$. So, the following theorem has been proven.

\begin{theorem}\label{main_SVP_th}
The problem \ref{gen_SVP_def} can be solved by an algorithm with arithmetic complexity bound
\begin{equation*}
    O\bigl( n^{\omega} \cdot \log(\Delta) + \min\{n,\Delta\} \cdot \Delta \cdot \log(\Delta)\bigr) = \tilde O(n^{\omega} + \min\{n,\Delta\} \cdot \Delta),\\
\end{equation*}
where all the computations are performed with integer numbers of the size $O(\log(\Delta))$ and $\omega$ is the matrix multiplication exponent.
\end{theorem}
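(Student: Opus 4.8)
The plan is to turn \ref{gen_SVP_def} into a minimization problem over a finite abelian group of order $\Delta$ and then to solve that group problem by a layered dynamic program whose single-layer update decomposes into $(\min,+)$-convolutions of \emph{convex} sequences, so that the linear-time algorithm of Axiotis \& Tzamos \cite{GraphKnap} applies.

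\textbf{Step 1 (reduction to a group problem).} First I would compute the Smith Normal Form $P A Q = S$, with $P,Q \in \ZZ^{n\times n}$ unimodular and $S$ diagonal; by \cite{SNFOptAlg} this costs $O(n^{\omega}\cdot\log(\Delta))$ operations on $O(\log\Delta)$-bit integers. Since $x \in \inth(A)$ iff $S^{-1} P x \in \ZZ^n$ iff $P x \equiv \BZero \pmod{S\cdot\ZZ^n}$, and writing $g_i$ for the class in $\GC = \ZZ^n/S\cdot\ZZ^n$ of the $i$-th column of $P$, problem \ref{gen_SVP_def} becomes exactly \ref{group_SVP_def}: minimize $\sum_i f(\abs{x_i})$ subject to $\sum_i x_i g_i = 0$ in $\GC$ and $x \neq \BZero$. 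By Remark \ref{group_compl_rm}, one group operation costs $O(\log\Delta)$, since $S$ has at most $\log_2\Delta$ diagonal entries different from $1$.

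\textbf{Step 2 (preprocessing).} Using Remark \ref{duplicates_SVP_rm}, I would discard duplicate and zero generators: if $g_i = \pm g_j$ with $i\neq j$, then $e_i \mp e_j$ is already feasible, and any strictly better solution must be some $\pm e_i$ with $g_i = 0$, so it suffices to keep one representative of each class $\{\pm g_i\}$. A comparison sort on group elements does this with $O(n\log n)$ group operations, leaving $m \leq \Delta/2+1$ distinct generators.

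\textbf{Step 3 (the dynamic program).} I would set up $\DPR(k,g_0)$ as in \ref{SVP_group_DP_def} and, for each layer $k$, compute all values $\DPR(k,\cdot)$ from $\DPR(k-1,\cdot)$ by combining the quantities $\psi_+$, $\psi_-$ and $\eta$. The subtle point is $\eta$: the convolution terms $\psi_{\pm}$ include the index $j=0$ and hence recover every feasible solution that uses a nonzero coordinate among the first $k-1$, but they miss the solutions of the form $(0,\dots,0,j)^\top$; these are exactly what $\eta$ records, computed layer-by-layer in $O(\Delta)$ group operations by the fill-in over the multiples $j\cdot g_k$. To compute $\psi_+(k,\cdot)$ I would partition $\GC$ into the $\Delta/d_k$ cosets of the cyclic subgroup $\langle g_k\rangle$, where $d_k = \abs{\langle g_k\rangle}$; on a fixed coset the update is, by the same reasoning as in \ref{knap_DP_rec}, precisely the $(\min,+)$-convolution of the sequence $\bigl(\DPR(k-1,\,q+i\,g_k)\bigr)_i$ with $\bigl(f_k(i)\bigr)_i$. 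Since $f$ is convex, $\bigl(f_k(i)\bigr)_i$ is a convex sequence, so \cite{GraphKnap} evaluates one coset in $O(\Delta/d_k)$ time, hence $\psi_+(k,\cdot)$ (and likewise $\psi_-(k,\cdot)$) in $O(\Delta)$ time; also $\DPR(1,\cdot)=\eta(1,\cdot)$ by monotonicity and evenness of $f$.

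\textbf{Step 4 (accounting) and the main obstacle.} Summing the per-layer costs gives $O(m\Delta) + \sum_{k=1}^m d_k\cdot T_{conv}(\Delta/d_k) = O(m\Delta)$ group operations, since with $T_{conv}$ linear the convolution sum telescopes; this is $O(m\cdot\Delta\cdot\log\Delta)$ arithmetic operations. Adding the SNF cost and the $O(n\log n\cdot\log\Delta)$ dedup cost yields the claimed bound $O(n^{\omega}\log\Delta + \min\{n,\Delta\}\cdot\Delta\cdot\log\Delta)$, with all integers of size $O(\log\Delta)$, the $\min\{n,\Delta\}$ arising because the DP runs with $m\le\Delta/2+1$ while the SNF keeps the $n^{\omega}$ term. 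I expect the main obstacle to be not any single estimate but the correctness verification of Step 3: one must check carefully that the constraint $x\neq\BZero$ is handled exactly — neither over- nor under-counting the single-nonzero-coordinate solutions via $\eta$ — and that the coset decomposition genuinely turns each layer update into a \emph{convex} $(\min,+)$-convolution, which is what makes the telescoping, hence the near-linear-in-$\Delta$ cost per variable, go through.
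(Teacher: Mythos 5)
Your proposal follows the paper's proof essentially verbatim: SNF reduction to the group problem \eqref{group_SVP_def}, duplicate elimination via Remark \ref{duplicates_SVP_rm}, the layered DP with $\psi_{\pm}$ and $\eta$ (with $\eta$ catching the single-nonzero-coordinate solutions), the coset decomposition into convex $(\min,+)$-convolutions solved in linear time by \cite{GraphKnap}, and the same accounting. The only cosmetic quibble is that the per-layer cost bound $d_k\cdot T_{conv}(\Delta/d_k)=O(\Delta)$ is a direct substitution rather than a telescoping sum, but the estimate is correct.
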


\subsection{\ref{CVP_def} problem}\label{CVP_subs}

Let us consider the generalized problem \ref{gen_CVP_def}.

After the maps $x \to x - b$ and $q-b \to q$, where $b = -\lfloor q \rceil$, the problem \ref{gen_CVP_def} transforms to:
\begin{equation}\label{gen_CVP_redef}
    \min\Bigl\{ \sum\limits_{i=1}^n f\bigl( \abs{x_i - q_i} \bigr) \colon x \in b + \inth(A) \Bigr\},
\end{equation}
with $\norm{q}_{\infty} < 1/2$. Additionally, we can assume that $q \geq 0$, because we can map $x_i$ to $-x_i$, for $q_i < 0$. Finally, we can sort $q_i$, so we have
\begin{equation}\label{qi_mono_eq}
1/2 > q_1 \geq q_2 \geq \dots \geq q_n \geq 0.
\end{equation}

Denote $f_i(x) = f\bigl(\abs{x - q_i}\bigr)$. It is easy to check that the following properties hold for $f_i(x)$:
\begin{enumerate}
    \item For any $i \in \intint n$, $f_i$ is monotone on the sets $\ZZ_{\geq 0}$ and $\ZZ_{\leq 0}$ and convex on $\ZZ$;
    \item For any $x \in \ZZ_{\geq 1}$, 
\begin{equation*}
    f_1(x) \leq f_2(x) \leq \dots \leq f_n(x);
\end{equation*}
    \item For any $x \in \ZZ_{ \leq 0}$,
    \begin{equation*}
    f_1(x) \geq f_2(x) \geq \dots \geq f_n(x).
\end{equation*}
    \item For any $x \in \ZZ_{\geq 1}$,
    \begin{equation*}
        f_1(x) - f_1(x-1) \leq f_2(x) - f_2(x-1) \leq \dots \leq f_n(x) - f_n(x-1);
    \end{equation*}
    \item For any $x \in \ZZ_{\leq 0}$,
    \begin{equation*}
        f_1(x-1) - f_1(x) \geq f_2(x-1) - f_2(x) \geq \dots \geq f_n(x-1) - f_n(x).
    \end{equation*}
\end{enumerate}

The property 1 could be checked directly. The properties 2,3 hold, due to \eqref{qi_mono_eq} and the the monotonicity of $f$. The properties 4,5 hold, due to \eqref{qi_mono_eq} and the convexity of $f$.  

As in Subsection \ref{SVP_subs}, using the SNF decomposition $P A Q = S$, we transform \eqref{gen_CVP_redef} to:
\begin{gather}
    \sum\limits_{i=1}^n f_i(x_i) \to \min\notag\\
    \begin{cases}
    P x \equiv P b \pmod{S \cdot \ZZ^n}\\
    x \in \ZZ^n.
    \end{cases}\label{SNF_CVP_def}
\end{gather}

Let us define $\GC$ and $g_i$ (for $i \in \intint n$) as it was done in Subsection \ref{SVP_subs}. Let us define $G = P b \bmod \diag(S)$ and reformulate \eqref{SNF_CVP_def} in the group minimization style:
\begin{gather}
    \sum\limits_{i=1}^n f_i(x_i) \to \min\notag\\
    \begin{cases}
    \sum\limits_{i=1}^n g_i \cdot x_i = G\\
    x \in \ZZ^n.
    \end{cases}\label{group_CVP_def}
\end{gather}

Now, we going to remove duplicates from $g_1, g_2, \dots, g_n$, but it is a bit more tricky problem than its analogue discussed in Remark \ref{duplicates_SVP_rm}. Assume that $g_1 = g_2 = \dots = g_k$.  We want to replace the variables $x_1, \dots, x_k$ by only one variable $y = x_1 + \dots + x_k$ attached to $g_1$. To this end, we need to replace the objective $\sum_{i=1}^k f_i(x_i)$ with a new equivalent objective $h(y)$. The following lemmas explain how to choose $h$.

\begin{lemma}\label{duplicates_elim_lm}
Let $g_1 = g_2 = \dots = g_k$, for $k \in \intint n$, and $x^*$ be an optimal solution of \ref{group_CVP_def}. Denote $S = x_1^* + \dots + x_k^*$.

Then, there exists an optimal solution $z^*$ with the following structure:
\begin{enumerate}
    \item If $S \geq 0$, then:
    \begin{equation}\label{greed_look_positive}
    z^* = (a+1, a+1, \dots, a+1, a , a, \dots, a)^\top,
    \end{equation}
    where $a \in \ZZ_{\geq 0}$.
    
    \item If $S < 0$, then:
    \begin{equation}\label{greed_look_negative}
    z^* = -(a, a, \dots, a, a+1 , a+1, \dots, a+1)^\top,    
    \end{equation}
    where $a \in \ZZ_{\geq 0}$.
\end{enumerate}
\end{lemma}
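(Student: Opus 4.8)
The plan is to reduce the statement to a one-dimensional separable convex minimization problem and then to solve that by a greedy/exchange argument that uses both the convexity of the $f_i$ and the crucial fact that the shifts $q_i$ all lie in $[0,1/2)$.

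First I would set up the reduction. Since $g_1=g_2=\dots=g_k$, the linear constraint in \eqref{group_CVP_def} depends on $x_1,\dots,x_k$ only through $x_1+\dots+x_k$. So, keeping the coordinates $x_{k+1}^*,\dots,x_n^*$ of the optimal solution $x^*$ fixed and writing $\Phi(z)=\sum_{i=1}^{k} f_i(z_i)$, it is enough to show that the problem $\min\{\Phi(z)\colon z\in\ZZ^k,\ \sum_{i=1}^{k}z_i=S\}$ has an optimal solution of the form \eqref{greed_look_positive} when $S\ge 0$ and of the form \eqref{greed_look_negative} when $S<0$; splicing such a $z^*$ back into $x^*$ then yields an optimal solution of \eqref{group_CVP_def} of the claimed shape. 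I would carry out the case $S\ge 0$ in detail; the case $S<0$ I would obtain either by repeating the argument with properties 3, 5 in place of 2, 4, or by applying the $S\ge 0$ case to $\tilde f_i(x):=f_i(-x)$ (which satisfies the analogous properties with the index order reversed) and negating.

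For $S\ge 0$: the feasible set $\{z\in\ZZ^k_{\ge0}\colon\sum_i z_i=S\}$ is finite, so a minimizer exists. Using property 1 (each $f_i$ is non-increasing on $\ZZ_{\le0}$ and non-decreasing on $\ZZ_{\ge0}$) together with the exchange $(z_j,z_l)\mapsto(z_j+1,z_l-1)$ for $z_j\le-1$ and $z_l\ge1$ (such $z_l$ exists since $\sum_{i\ne j}z_i\ge S+1\ge1$), one checks that $\Phi$ does not increase while $\sum_i\abs{z_i}$ strictly decreases, so one may assume the minimizer satisfies $z\ge0$. The technical heart is a comparison lemma refining properties 4--5: \emph{for $1\le i<l\le k$ and integers $s\ge t+1$ one has $f_i(s)-f_i(s-1)\ge f_l(t)-f_l(t-1)$}. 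I would prove it by noting that $f_i$ extends to a convex function of a real argument with $f_i(x)=f_1(x+q_1-q_i)$, so the unit increment $f_1(y)-f_1(y-1)$ is non-decreasing in $y$, and $s-t\ge1>1/2>q_i-q_l\ge0$ forces $s+q_1-q_i\ge t+q_1-q_l$. Then, among minimizers with $z\ge0$, I would pick $z^*$ minimizing $\sum_i z_i^2$, breaking ties by minimizing $\sum_i i\cdot z_i$, and argue: (i) $z^*$ is non-increasing --- an inversion $z_i^*<z_l^*$ with $i<l$ would allow the exchange $(z_i^*+1,z_l^*-1)$, which stays $\ge0$ and, by convexity of $f_i$ followed by property 4, does not increase $\Phi$, while strictly decreasing $\sum_i z_i^2$ (or, when $z_l^*=z_i^*+1$, strictly decreasing $\sum_i i\cdot z_i$) --- contradiction; (ii) $z_1^*\le z_k^*+1$ --- if $z_1^*\ge z_k^*+2$, the exchange $(z_k^*+1,z_1^*-1)$ again stays $\ge0$, optimality gives $f_k(z_k^*+1)-f_k(z_k^*)\ge f_1(z_1^*)-f_1(z_1^*-1)$, the comparison lemma with $i=1,l=k,s=z_1^*,t=z_k^*+1$ gives the reverse inequality, so $\Phi$ is unchanged and the resulting vector is another minimizer with $z\ge0$ and strictly smaller $\sum_i z_i^2$ --- contradiction. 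Hence $z^*$ is non-increasing, non-negative and has $z_1^*-z_k^*\le1$, i.e.\ it is exactly the staircase \eqref{greed_look_positive} with $a=\lfloor S/k\rfloor\ge0$; symmetrically one obtains \eqref{greed_look_negative} for $S<0$.

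The step I expect to be the main obstacle is (ii): preserving optimality while rebalancing the two extreme coordinates depends on the comparison lemma, and that lemma genuinely needs $\abs{q_i-q_j}<1$ --- the abstract properties 1--5 by themselves are not enough, since one can build separable convex $f_i$ whose minimizer over a hyperplane has spread at least $2$. Besides this I would have to be careful with a few routine points: that the subproblem decouples exactly (this is where $g_1=\dots=g_k$ is used), that the fixed-sign feasible sets are finite so that minimizers exist, and that "non-increasing, non-negative, spread $\le1$" pins the staircase vector down uniquely.
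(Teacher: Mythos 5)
Your proof is correct, and at the decisive step it takes a genuinely different route from the paper. The paper also reduces to the separable subproblem $\min\{\sum_{i=1}^k f_i(z_i)\colon \sum_i z_i=S,\ z\in\ZZ^k_{\geq 0}\}$ via the same sign-fixing exchange, but then views the feasible set as the bases of a multiset matroid and invokes the matroid greedy algorithm, concluding that ``due to properties 4,5'' some greedy run outputs the staircase. You instead take a lexicographically extremal minimizer (objective, then $\sum_i z_i^2$, then $\sum_i i\cdot z_i$) and kill inversions and spread $\geq 2$ by local exchanges. What your approach buys is precisely the point you flag as the main obstacle: the abstract properties 1--5 do \emph{not} imply the staircase shape. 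For instance, with $k=2$ and unit increments $f_1(1)-f_1(0)=0$, $f_1(2)-f_1(1)=1$, $f_2(1)-f_2(0)=2$, $f_2(2)-f_2(1)=3$, all of properties 1--5 hold, yet $(2,0)$ beats $(1,1)$ and is also what the greedy produces; the greedy's choice between raising coordinate $1$ from $a{+}1$ to $a{+}2$ and raising coordinate $l$ from $a$ to $a{+}1$ is governed by exactly your comparison inequality $f_1(a{+}2)-f_1(a{+}1)\geq f_l(a{+}1)-f_l(a)$, which does not follow from property 4 plus convexity alone. Your comparison lemma supplies it from the concrete structure $f_i(x)=f(\abs{x-q_i})$ with $0\leq q_i-q_l<1/2<1$ (via monotonicity of unit increments of the convex real extension), so your argument closes a step that the paper's proof leaves implicit. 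The only caveats are presentational: the minimizer of the unrestricted subproblem over $\ZZ^k$ should be obtained by first pushing $x^*[1,k]$ to the nonnegative orthant and then minimizing over the finite set $\{z\geq 0,\ \sum_i z_i=S\}$ (rather than asserting existence over $\ZZ^k$ up front), and the real-analytic form of $f$ is an implicit reading of the paper's hypotheses ($f$ must in any case be defined at non-integer points since the $q_i$ are rational); both are easily arranged and do not affect correctness.
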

\begin{proof}
Note that the expressions $g_1 \cdot x_1^* + \dots + g_k \cdot x_k^*$ and $g_1 \cdot S$ are equivalent in therms of constraints of $\eqref{group_CVP_def}$.

Assume that $S \geq 0$. First of all, we claim that there exists an optimal solution $z^*$ with the property $z^*_i \geq 0$, for $i \in \intint k$. Assume that there exist $i ,j \in \intint k$ with $x^*_i \geq 1$ and $x^*_j \leq -1$. Since $S \geq 0$, if $x^*_j$ exists, then $x^*_i$ exists also. Next, we construct a vector $z^*$, which coincides with $x^*$ in all the coordinates, except $i,j$. Put $z^*_i = x^*_i - 1$ and $z^*_j = x^*_j + 1$. Due to Property 1, we have $\sum_{i=1}^k f_i(z^*_i) \leq \sum_{i=1}^k f_i(x^*_i)$. Such a procedure can be repeated until no negative coordinates remain. Consequently, it can be assumed that $x^*_i \geq 0$, for $i \in \intint k$.

Let us consider the following auxiliary optimization problem:
\begin{gather*}
    \sum_{i=1}^k f_i(x_i) \to \min\notag\\
    \begin{cases}
    x_1 + \dots + x_k = S\\
    x \in \ZZ_{\geq 0}.
    \end{cases}\label{aux_CVP_prob}
\end{gather*}
Clearly, $x^*[1,k]$ gives an optimal solution of this problem, and vice versa, an optimal solution of \eqref{aux_CVP_prob} could be used to generate the first $k$ coordinates of $x^*$. 

Let us consider the set 
$$
\SC = \{x \in \ZZ^k_{\geq 0} \colon x_1 + \dots +x_k \leq S\}. 
$$ Elements of $\SC$ could be treated as the characteristic vectors of multisets with the cardinality $S$. Identifying vectors with multisets, we can see that $\SC$ is a matroid, see, for example, \cite[Proposition~13.4, Part~13.~Matroids]{KorteBook}. The vectors $z \in \SC$ with $z_1 + \dots + z_k = S$ are the bases of $\SC$. Consequently, an optimal solution of \eqref{aux_CVP_prob} is exactly a base of $\SC$ with the minimal possible value of the objective function.

Since $\SC$ is a matroid, an optimal solution of \eqref{aux_CVP_prob} can by found by the following greedy algorithm:
\begin{enumerate}
    \item Assign $s := 0$, $x := \BZero^k$, and $F : = f_1(0) + \dots + f_k(0)$;
    \item While $s \leq S$ do the following:
    \item \qquad Choose $i \in \intint k$, such that the value $f_i(x_i+1) - f_i(x_i)$ is minimal;
    \item \qquad Assign $x_i := x_i + 1$, $s := s+1$, and $F := F + f_i(x_i+1) - f_i(x_i)$;
    \item \qquad Move to the step 2;
    \item Return $x$ as a greedy solution and $F$ as $f(x)$;
\end{enumerate}

Due to the properties 4,5, there exists a greedy solution $z^*$ that looks like \eqref{greed_look_positive}. This proves the lemma for the case $S \geq 0$. The case $S < 0$ is absolutely similar. $\Box$


\end{proof}

\begin{lemma}
Let $g_1 = g_2 = \dots = g_k$, for $k \in \intint n$. There exists a function $h(x) \colon \ZZ \to \RR$, such that the problem \eqref{group_CVP_def} and the following problem
\begin{gather}
    h(y) + \sum\limits_{i = k+1}^n f_i(x_{i-k}) \to \min\notag\\
    \begin{cases}
    g_1 \cdot y + \sum\limits_{i = k+1}^n g_i \cdot x_{i-k} = G\\
    x \in \ZZ_{\geq 0}^{n-k}\\
    y \in \ZZ_{\geq 0}
    \end{cases}\label{reduced_size_CVP_def}
\end{gather} are equivalent.

The function $h$ can be defined in the following way:
\begin{enumerate}
    \item If $y \in \ZZ_{\geq 0}$, then compute $r = y \bmod k$ and $a = \lfloor y/k \rfloor$. Let us construct the vector
    $$
    z = (a+1, \dots,a+1, a, \dots, a)^\top,
    $$ where $a+1$ is taken $r$ times. Put $h(y) = \sum\limits_{i=1}^k f_i(z_i)$;
    
    \item If $y \in \ZZ_{< 0}$, then compute $r = (-y) \bmod k$ and $a = \lfloor (-y)/k \rfloor$. Let us construct the vector
    $$
    z = -(a, \dots,a, a+1, \dots, a+1)^\top,
    $$ where $a+1$ is taken $r$ times. Put $h(y) = \sum\limits_{i=1}^k f_i(z_i)$.
\end{enumerate}

Additionally, for any $m \geq 0$, the sequences 
\begin{gather*}
    h(0),\,h(1),\, \dots,\, h(m) \quad \text{and}\\
    h(0),\,h(-1),\, \dots,\, h(-m)
\end{gather*} can be computed using $O(m)$ arithmetic operations.
\end{lemma}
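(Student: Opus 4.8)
The plan is to reduce the whole lemma to a single identity: for every $y \in \ZZ$,
\[
h(y) \;=\; \min\Bigl\{ \textstyle\sum_{i=1}^{k} f_i(x_i) \;\colon\; x \in \ZZ^{k},\ \sum_{i=1}^{k} x_i = y \Bigr\}.
\]
The right-hand side is always finite (take $x = (y,0,\dots,0)$), so this does define a function $\ZZ \to \RR$. Once the identity is available, the equivalence of \eqref{group_CVP_def} and \eqref{reduced_size_CVP_def} is a routine decode/encode correspondence, and the $O(m)$ bound for evaluating $h$ on a run of consecutive arguments drops out of the explicit staircase shape of the minimizer. I would carry this out in three steps.

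First I would prove the identity, repeating the two-stage argument from the proof of Lemma \ref{duplicates_elim_lm} but now with a fixed target sum $y$ in place of the sum of an optimal solution. For $y \ge 0$: starting from any $x \in \ZZ^{k}$ with $\sum_i x_i = y$, the exchange move that replaces a pair $(x_i,x_j)$ with $x_i \ge 1$ and $x_j \le -1$ by $(x_i-1,\,x_j+1)$ never increases $\sum_i f_i(x_i)$ by Property~1, so the minimum is already attained on $\ZZ_{\ge 0}^{k}$; and over $\ZZ_{\ge 0}^{k}$ with prescribed sum the matroid/greedy argument driven by Property~4 shows the minimum is attained by $(a+1,\dots,a+1,a,\dots,a)^{\top}$ with $a = \lfloor y/k \rfloor$ and exactly $r = y \bmod k$ leading entries equal to $a+1$, which is precisely the vector $z$ defining $h(y)$ in case~1. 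The case $y < 0$ is the mirror image, using Property~5 and the vector of case~2, matching \eqref{greed_look_negative}. I expect this step to be the main obstacle: one must certify optimality of the staircase vector among \emph{all} integer vectors with a prescribed coordinate sum, not merely the nonnegative ones, and this works only by chaining both halves of the earlier argument (reduction to the nonnegative orthant, then the matroid exchange).

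Next I would establish the equivalence. From a feasible point $(y,x_1,\dots,x_{n-k})$ of \eqref{reduced_size_CVP_def} I construct a feasible point of \eqref{group_CVP_def} by replacing $y$ with the staircase vector $z$ (of the case matching the sign of $y$) on the columns $g_1 = \dots = g_k$: since $\sum_{i=1}^{k} g_1 z_i = g_1 \sum_i z_i = g_1 y$, the group constraint is preserved, and the objective value is unchanged because $h(y) = \sum_{i=1}^{k} f_i(z_i)$ by the identity. Conversely, from a feasible $x$ of \eqref{group_CVP_def} I pass to $(y, x_{k+1}, \dots, x_n)$ with $y = x_1 + \dots + x_k$, which is again feasible and whose objective value is at least that of $x$ by the identity (the sign case split of Lemma \ref{duplicates_elim_lm} guarantees the aggregated block may be taken with the right orientation). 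Comparing optimal values in both directions closes the equivalence.

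Finally, for the complexity claim, I would note that passing from $h(y)$ to $h(y+1)$ changes the defining vector $z$ in exactly one coordinate, and only by $+1$: with $r = y \bmod k$ and $a = \lfloor y/k \rfloor$, if $r \le k-2$ the $(r+1)$-st entry rises from $a$ to $a+1$, and if $r = k-1$ the $k$-th entry rises from $a$ to $a+1$ while the quotient increments. Hence $h(y+1) = h(y) - f_j(t) + f_j(t+1)$ for an index $j$ and value $t$ found in $O(1)$ arithmetic operations, so the run $h(0),h(1),\dots,h(m)$ is produced in $O(m)$ operations; the run $h(0),h(-1),\dots,h(-m)$ is handled identically with case~2 replacing case~1.
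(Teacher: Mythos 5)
Your argument is correct and runs on the same machinery as the paper's proof: the exchange-plus-greedy analysis of Lemma \ref{duplicates_elim_lm} to certify the staircase block, the substitution between $y$ and the staircase vector $z$ for the equivalence, and the telescoping $h(y+1)=h(y)+f_{r+1}(a+1)-f_{r+1}(a)$ with $r=y\bmod k$, $a=\lfloor y/k\rfloor$ for the $O(m)$ bound. The one real difference is organizational: you route everything through the variational identity $h(y)=\min\bigl\{\sum_{i=1}^k f_i(x_i)\colon x\in\ZZ^k,\ \sum_{i}x_i=y\bigr\}$ for \emph{every} $y\in\ZZ$, which obliges you to re-run the two-stage argument of Lemma \ref{duplicates_elim_lm} with an arbitrary prescribed sum (that lemma, as stated, only covers the sum arising from an optimal solution), whereas the paper invokes the lemma as a black box for the forward direction and, for the backward direction, needs only the trivial fact that the staircase $z$ is feasible with value exactly $h(y)$. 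Your identity is stronger than strictly necessary but makes the two-inequality sandwich transparent; both proofs use the (correct) observation that a vector of the form \eqref{greed_look_positive} or \eqref{greed_look_negative} with prescribed sum $y$ is unique, hence coincides with the $z$ in the definition of $h$. Two small remarks: in your converse direction the image point has objective value \emph{at most} (not ``at least'') that of $x$, since $x[1,k]$ is a candidate in the minimum defining $h(y)$ --- clearly a slip of the pen; and, like the paper, you credit the balancedness of the greedy staircase to Properties 4--5 alone, whereas the needed cross-level comparison (increment of $f_{r+1}$ at level $a+1$ versus increments of $f_1,\dots,f_r$ at level $a+2$) really uses that the $f_i$ are shifts of one convex monotone $f$ by offsets $q_i\in[0,1/2)$ --- a caveat you inherit from, rather than introduce into, the paper's own Lemma \ref{duplicates_elim_lm}.
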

\begin{proof}
If $x^*$ is an optimal solution of \eqref{group_CVP_def}, then, due to Lemma \ref{duplicates_elim_lm}, there exists an optimal solution $z^*$ of \eqref{group_CVP_def}, such that the first $k$ components of $z^*$ look like \eqref{greed_look_positive} or \eqref{greed_look_negative}. By the definition of $h$, we have $\sum_{i=1}^k f_i(z^*_i) = h(y)$, where $y = z_1^* + \dots z_k^*$. Note that $(y,\, x^*[k+1,n])$ is a feasible solution of \eqref{reduced_size_CVP_def} with the same value of the objective function.

In the opposite direction, let $(y,\, x^*)$ be an optimal solution of \eqref{reduced_size_CVP_def}. Let us construct the vector $z$ as it was described in the lemma definition. Clearly, the vector $\binom{z}{x^*}$ is a feasible solution of \eqref{group_CVP_def} with the same value of the objective function.

Finally, let us explain how to compute $h(0),h(1), \dots, h(m)$ with $O(m)$ operations. Assume that $h(i)$  has already been computed, and let $r = i \bmod k$ and $a = \lfloor i / k \rfloor$. Then, by the definition of $h$, we have $h(i + 1) = h(i) + f_{r+1}(a+1) - f_{r+1}(a)$. Hence, we need $O(1)$ operations to compute $h(i+1)$ and $O(m)$ operations to compute the whole sequence. A similar algorithm works for $h(0), h(-1), \dots, h(-m)$. $\Box$
\end{proof}

Using the previous lemma, we can remove all the duplicates and assume that all the elements $g_1, g_2, \dots, g_n$ are unique.

The remaining part of our algorithm is very close to the algorithm from Subsection \ref{SVP_subs}. For $k \in \intint n$ and $g_0 \in \GC$, we define $\DPR(k,g_0)$, $\psi_+(k,g_0)$ and $\psi_-(k,g_0)$. Clearly, the problem $\DPR(n,G)$ is equivalent to the problem \ref{group_CVP_def}. The values $\psi_+(k,g_0)$ and $\psi_-(k,g_0)$ can be computed with the same formulas and algorithms. The only minor difference is the recurrent formula for $\DPR(k,g_0)$:
$$
\DPR(k,g_0) = \min\bigl\{ \psi_+(k-1,g_0), \psi_-(k-1,g_0) \bigr\}.
$$

Therefore, we have proven our conclusive result.
\begin{theorem}\label{main_CVP_th}
The problem \ref{gen_CVP_def} can be solved by an algorithm with arithmetic complexity bound
\begin{equation*}
    O\bigl( n^{\omega} \cdot \log(\Delta) + \min\{n,\Delta\} \cdot \Delta \cdot \log(\Delta)\bigr) = \tilde O(n^{\omega} + \min\{n,\Delta\} \cdot \Delta),\\
\end{equation*}
where all computations are performed with integer numbers of the size $O(\log(\Delta))$ and $\omega$ is the matrix multiplication exponent.
\end{theorem}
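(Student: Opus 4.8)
The plan is to transcribe the dynamic programming scheme of Subsection~\ref{SVP_subs} for \ref{gen_SVP_def}, adjusting it to the asymmetric objective $f_i(x) = f(\abs{x - q_i})$. First I would normalize the instance: the shift $x \to x - b$ with $b = -\lfloor q \rceil$ gives $\norm{q}_{\infty} < 1/2$, reflecting the coordinates with $q_i < 0$ makes $q \geq \BZero$, and sorting yields $1/2 > q_1 \geq \dots \geq q_n \geq 0$. With these reductions, the five monotonicity/convexity properties of the $f_i$ listed before \eqref{SNF_CVP_def} follow at once from monotonicity and convexity of $f$ together with \eqref{qi_mono_eq}. Using an SNF decomposition $P A Q = S$ exactly as in Subsection~\ref{SVP_subs}, I then pass to the group problem \eqref{group_CVP_def} over $\GC = \ZZ^n / S \ZZ^n$ with $\abs{\GC} = \Delta$, columns $g_i = P_i \bmod \diag(S)$, and target $G = P b \bmod \diag(S)$.

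The second step is duplicate elimination, which is the genuinely new ingredient compared with the \ref{gen_SVP_def} case (cf.\ Remark~\ref{duplicates_SVP_rm}). For a maximal block $g_1 = \dots = g_k$ I substitute a single variable $y = x_1 + \dots + x_k$ attached to $g_1$ and replace $\sum_{i \leq k} f_i(x_i)$ by the function $h(y)$ of the lemma preceding Theorem~\ref{main_CVP_th}. Its correctness rests on Lemma~\ref{duplicates_elim_lm}, whose proof identifies the feasible block-solutions of bounded total size with a matroid and runs the greedy algorithm: by properties~4--5 an optimal block solution has the staircase shape \eqref{greed_look_positive} or \eqref{greed_look_negative}, which is precisely what $h$ encodes. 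One then checks that \eqref{group_CVP_def} and the reduced problem \eqref{reduced_size_CVP_def} are equivalent, that $h$ stays convex and monotone on $\ZZ_{\geq 0}$ and on $\ZZ_{\leq 0}$ (so that the fast convex convolution below applies), and that any prefix $h(0), h(1), \dots, h(m)$ or $h(0), h(-1), \dots, h(-m)$ is tabulated in $O(m)$ arithmetic operations. After processing every block the surviving $g_i$ are pairwise distinct, so the reduced number $m'$ of variables satisfies $m' \leq \abs{\GC} = \Delta$.

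On the de-duplicated instance I run the DP of Subsection~\ref{SVP_subs} with one cosmetic change. For $k \in \intint{m'}$ and $g_0 \in \GC$ I define $\DPR(k,g_0)$, together with $\psi_+(k,g_0) = \min_{j \in \intint[0]{\Delta}}\{\DPR(k-1, g_0 - j \cdot g_k) + f_k(j)\}$ and the symmetric $\psi_-$; since we allow $x \in \ZZ^n$ (and not $\ZZ^n \setminus \{\BZero\}$), no $\eta$-term is needed and $\DPR(k,g_0) = \min\{\psi_+(k-1,g_0), \psi_-(k-1,g_0)\}$, with $\DPR(m',G)$ yielding the optimum of \eqref{group_CVP_def}. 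A whole layer $\psi_+(k,\cdot)$ is computed by splitting $\GC$ into the cosets of the cyclic subgroup $\langle g_k \rangle$ of order $d_k$; on each coset the update is a $(\min,+)$-convolution of length $\Delta/d_k$ whose second sequence is the convex function $f_k$ (or $h$), hence is computable in linear time by Axiotis \& Tzamos \cite{GraphKnap}. This gives $O(d_k \cdot T_{conv}(\Delta/d_k)) = O(\Delta)$ group operations per layer, $O(m' \cdot \Delta)$ group operations in total, and by Remark~\ref{group_compl_rm} each group operation costs $O(\log\Delta)$ arithmetic operations.

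It remains to assemble the bound. Computing the SNF of the original matrix costs $T_{SNF}(n,\Delta) = O(n^\omega \log\Delta)$ arithmetic operations \cite{SNFOptAlg}, sorting and deduplication add $O(n \log n \cdot \log\Delta)$, and the DP adds $O(m' \cdot \Delta \cdot \log\Delta)$ with $m' \leq \min\{n, \Delta\}$; summing yields $O\bigl(n^\omega \log\Delta + \min\{n,\Delta\} \cdot \Delta \cdot \log\Delta\bigr)$, and, exactly as in Theorem~\ref{main_SVP_th}, all intermediate integers are of size $O(\log\Delta)$. I expect the duplicate-elimination step to be the main obstacle: one has to justify the matroid/greedy characterization in Lemma~\ref{duplicates_elim_lm}, extract from it a closed-form, linear-time-tabulable objective $h$, and verify that $h$ keeps the convexity needed for \cite{GraphKnap}; everything else is a faithful copy of the \ref{gen_SVP_def} development.
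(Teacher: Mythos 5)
Your proposal matches the paper's own proof essentially step for step: the same normalization of $q$, the same SNF/group reformulation, the same matroid--greedy duplicate-elimination lemma producing the staircase function $h$, and the same coset-wise convex $(\min,+)$-convolution DP with $\psi_{\pm}$ and no $\eta$-term. The only point you raise that the paper glosses over --- explicitly verifying that $h$ inherits the convexity/monotonicity needed for the linear-time convolution of \cite{GraphKnap} --- is a worthwhile check, but it follows directly from properties 4--5 and the incremental formula $h(i+1)=h(i)+f_{r+1}(a+1)-f_{r+1}(a)$, so it does not constitute a different route.
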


\section*{Acknowledgments}

The article was prepared within the framework of the Basic Research Program at the National Research University Higher School of Economics (HSE). 

The authors would like to thank A.~Vanin for useful discussions during preparation of this article.

\bibliographystyle{spmpsci}
\bibliography{grib_biblio}

\end{document}